\newcommand{\marg}{\text{marg}}
\newtheorem{theorem}{Theorem}
\newtheorem{lemma}{Lemma}
\newtheorem{exmp}{Example}
\newtheorem{corollary}{Corollary}
\newtheorem{claim}{Claim}
\newenvironment{proof}[1][Proof]{\noindent\textbf{#1} }{\ \rule{0.5em}{0.5em}}
\begin{document}

	\title{Payoff Continuity in Games of Incomplete Information Across Models of Knowledge\begin{footnote}{An associate editor and three anonymous referees provided valuable feedback that improved the quality of the manuscript. Aislinn Bohren, Eduardo Faingold, Ben Golub, Tetsuya Hoshino, Scott Kaplan, Annie Liang, George Mailath, Stephen Morris, Ricardo Serrano-Padial, Juuso Toikka, and Yuichi Yamamoto provided helpful comments, suggestions, and guidance. \href{https://www.refine.ink/}{Refine.ink} was used to proofread the paper for consistency and clarity. The views expressed here are solely my own and do not in any way represent the views of the U.S. Naval Academy, U.S. Navy, or the Department of Defense.}\end{footnote}}

	\author{Ashwin Kambhampati\begin{footnote}{Department of Economics, United States Naval Academy. Email: \texttt{kambhamp@usna.edu}. Postal Address: Michelson 361, 572M Holloway Road,
Annapolis, MD 21402.}\end{footnote}} 
	\maketitle
	\thispagestyle{empty}
 \vspace{-10mm}
	\begin{abstract}
    Equilibrium predictions in games of incomplete information are sensitive to the assumed information structure. \cite{MS1996} and \cite{kajii1998payoff} define topological notions of proximity for common prior information structures such that two information structures are close if and only if  (approximate) equilibrium payoffs are close.  However, \cite{MS1996} fix a common prior and define their topology on profiles of partitions over a state space, whereas \cite{kajii1998payoff} define their topology on common priors over the product of a state space and a type space. We prove the open conjecture that two partition profiles are close in the \cite{MS1996} topology if and only if there exists a labeling of types such that the associated common priors are close in the \cite{kajii1998payoff} topology.
	\end{abstract}

	\textbf{Keywords}: incomplete information, Bayesian Nash equilibrium, common knowledge\\
    
    \textbf{JEL codes}: C70, C72, D80

\newpage
\pagenumbering{arabic}
\section{Introduction}

A game of incomplete information consists of a set of players, an action set and payoff function for each player, and an information structure. How does the set of Bayesian Nash equilibrium payoffs change as the information structure changes? \cite{rubinstein1989electronic}'s Email Game illustrates a striking payoff discontinuity; there can exist an equilibrium under a common knowledge information structure yielding expected payoffs that are not approximated in \textit{any} equilibrium under an information structure in which there are arbitrarily many, but finite, levels of mutual knowledge.

 \cite{MS1996} and \cite{kajii1998payoff} identify coarse topologies on common prior information structures that preserve continuity of $\epsilon$- Bayesian Nash equilibrium payoffs across all bounded games of incomplete information.\begin{footnote}{This exercise is uninformative if one requires that players exactly optimize ($\epsilon=0$). Theorem 5 of \cite{peskietal} considers a general space of information structures and shows that the topology that preserves payoff continuity of (exact) Bayesian Nash equilibria on this space is the discrete topology.}\end{footnote} Both topologies demonstrate that common $p$-belief (\cite{monderer1989approximating}) is the appropriate relaxation of common knowledge to preserve continuity of equilibrium payoffs. However, \cite{MS1996} and \cite{kajii1998payoff} model the proximity of information structures differently. \cite{MS1996} fix a state space and a common prior over it, and consider the differences in beliefs induced by a change in partitions over the state space. \cite{kajii1998payoff} fix state and type spaces, and consider the differences in beliefs induced by a change in the common prior over its product. 

The relationship between the two modeling approaches and the resulting topologies has remained an open question. As \cite{kajii1998payoff} write,
\begin{quote}
	Our characterization of the proximity of information has a similar flavor to Monderer and Samet's, but we have not been able to establish a direct comparison. By considering a fixed type space, we exogenously determine which types in the information systems correspond to each other. In the Monderer and Samet approach, it is necessary to work out how to identify types in the two information systems. Thus we conjecture that two information systems are close in Monderer and Samet's sense if and only if the types in their construction can be labelled in such a way that the information systems are close in our sense.
\end{quote}
In this paper, we prove \cite{kajii1998payoff}'s conjecture by constructing distance-preserving maps from partition profiles to common priors.

\section{Setting}\label{prelim}

Fix a set of two or more players $\mathcal{N}:= \{1,2, \ldots ,N\}$ and a probability triple $(S, \Sigma, P)$, where $S$ is an infinite state space, $\Sigma$ is a $\sigma$-algebra of events, and $P$ is a probability measure. The state space $S$ can be taken to be countable or uncountable. In what follows, all countable sets are equipped with the discrete $\sigma$-algebra and all product sets are equipped with the product $\sigma$-algebra. 

We introduce here some useful notation. Let $X$, $Y$, and $Z$ be measure spaces. For any measurable function $g: X \rightarrow Y$, let $g(E)$ denote the image of $E \subseteq X$ under $g$ and $g^{-1}(F)$ denote the pre-image of $F \subseteq Y$ under $g$. If $h: X \rightarrow Z$ is another measurable function, then let $g \times h: X \rightarrow Y \times Z$ be the measurable function defined by $(g \times h)(x)= (g(x), h(x))$ for all $x \in X$. Finally, let $\iota: S \rightarrow S$ be the identity function on the set of states, i.e., $\iota(s)=s$ for all $s \in S$. 
 
\subsection{The MS Topology}

Denote by $\mathcal{P}$ the set of partitions of $S$ into non-null elements of $\Sigma$. Note that every element of $\mathcal{P}$ is thus a countable partition. Denote by $\mathcal{P}^N$ the set of all partition profiles, with typical element denoted by $\Pi=(\Pi_1, \ldots ,\Pi_N)$.
We define the \cite{MS1996} topology on $\mathcal{P}^N$.

Denote by $\Pi_i(s)$ player $i$'s partition element containing a state $s \in S$. At a state $s \in S$, player $i$ assigns probability $P(E|\Pi_i(s))$ to the event $E \subseteq S$. Player $i$ \textbf{$p$-believes} an event $E \subseteq S$ at a state $s \in S$ if $P(E|\Pi_i(s)) \geq p.$ Denote by $B^p_{\Pi_i}(E)$ the set of states at which player $i$ $p$-believes $E$ under the partition $\Pi_i$. The set of states at which $E$ is \textbf{mutual $p$-belief} is $B^p_\Pi(E):= \underset{i \in \mathcal{N}}{\cap} B^p_{\Pi_i}(E)$. The set of states at which $E$ is \textbf{$m$-level mutual $p$-belief} is $(B^p_\Pi)^m(E)$, the $m$-th iteration of $B^p_\Pi(\cdot)$ over the set $E$. Finally, the set of states at which $E$ is \textbf{common $p$-belief} is $C^p_\Pi(E) := \underset{m \geq 1}{\cap} (B^p_\Pi)^m(E)$.

Define $I_{\Pi, \Pi'}(\epsilon)$ to be the set of states at which the conditional symmetric difference between each player's partition elements containing that state is less than $\epsilon$,
$$I_{\Pi, \Pi'}(\epsilon):= \underset{i \in \mathcal{N}}{\cap} \{s \in S: \max\{P(\Pi_i(s) \backslash \Pi'_i(s) | \Pi_i(s)), P(\Pi'_i(s) \backslash \Pi_i (s) | \Pi'_i(s))\} \leq \epsilon\}.$$
Define
$$d^{MS}(\Pi,\Pi'):=\max\{d^{MS}_1(\Pi,\Pi'), d^{MS}_1(\Pi',\Pi)\},$$
where $d^{MS}_1(\Pi,\Pi')$ is an ex-ante measure of states in $I_{\Pi,\Pi'}(\epsilon)$ at which the event $I_{\Pi, \Pi'}(\epsilon)$ is common $(1-\epsilon)$-belief:
$$d^{MS}_1(\Pi,\Pi'):= \inf \{ \epsilon : P(C^{1-\epsilon}_{\Pi'}(I_{\Pi, \Pi'}(\epsilon)) \cap I_{\Pi,\Pi'}(\epsilon)) \geq 1- \epsilon\}.$$

Using $d^{MS}$, we now define the \textbf{MS topology}. For each $\Pi \in \mathcal{P}^N$ and $r>0$, define the $r$-neighborhood of $\Pi$ by \[B_{d^{MS}}(\Pi, r):= \{ \Pi' \in \mathcal{P}^N: d^{MS}(\Pi, \Pi') < r \}. \] Then, a subset of partition profiles, $O \subseteq \mathcal{P}^N$, is open in the MS topology if and only if, for all $\Pi \in O$, there exists an $r>0$ such that $B_{d^{MS}}(\Pi, r) \subseteq O$. 

We remark here that $d^{MS}$ is not the distance defined in \cite{MS1996}. \cite{MS1996} replace the common $p$-belief operator with the stronger ``joint common repeated $p$-belief" operator and bound $I_{\Pi,\Pi'}(\epsilon)$ by $1/2$ so that their distance satisfies the triangle inequality (and is therefore a pseudo metric). Nevertheless, $d^{MS}$ induces the same topology as the pseudo metric defined in \cite{MS1996} by Theorem 5.2 of their paper.

\subsection{The KM Topology}
Fix a type space $T :=T_1 \times \cdots \times T_N$, where $T_i$ is a countably infinite set of types for player $i \in \mathcal{N}$. We call a pair $(s,t) \in S \times T$ a \textbf{KM state}. Denote by $\Delta(S \times T)$ the set of probability measures over KM states, i.e., the set of common priors. \cite{kajii1998payoff} define their topology on the space of all common priors, assuming $S$ is a countably infinite set. We define the trace of the \cite{kajii1998payoff} topology, extended to allow for a potentially uncountable state space $S$, on the subset of partitional information structures, i.e., those generated by the fixed prior over states $P \in \Delta(\Theta)$ and some partition profile $\Pi \in \mathcal{P}^N$.   

We begin by defining the \cite{kajii1998payoff} topology on the space of all common priors, allowing for a potentially uncountable state space $S$. Denote by $\mu(t_i)$ the marginal distribution of $\mu$ on $T_i$ evaluated at $t_i$. If $\mu(t_i)>0$, then the probability of an event $E \subseteq S \times T$ conditional on player $i$'s type $t_i$ is denoted by $\mu(E |t_i):= \mu(E \cap (S \times \{t_i\} \times T_{-i}))/ \mu(t_i)$. Player $i$ \textbf{$p$-believes} an event $E \subseteq S \times T$ at $(s,t) \in S \times T$ if $\mu(E|t_i) \geq p$ or $\mu(t_i)=0$. Denote by $B^p_{\mu_i}(E)$ the set of all KM states, $(s,t)$, at which player $i$ $p$-believes event $E$. The set of KM states at which $E$ is \textbf{mutual $p$-belief} is $B^p_\mu(E):= \underset{i \in \mathcal{N}}{\cap} B^p_{\mu_i}(E) .$ The set of KM states at which $E$ is \textbf{$m$-level mutual $p$-belief} is $(B^p_\mu)^m(E)$, the $m$-th iteration of $B^p_\mu(\cdot)$ over the set $E$. The set of KM states at which $E$ is \textbf{common $p$-belief} is $C^p_\mu(E):= \underset{m \geq 1}{\cap} (B^p_\mu)^m(E)$. 

We now define the set of KM states at which each player has conditional beliefs that differ by at most $\epsilon$ over any event given their type. Specifically, let $A_{\mu, \mu'}(\epsilon)$ be the set of KM states $(s,t) \in S \times T$ such that
\begin{enumerate}
	\item $\mu(t_i)>0$ and $\mu'(t_i)>0$, and
	\item $|\mu(E \times F| t_i)- \mu'(E \times F | t_i)| \leq \epsilon$ for all events $E \times F \subseteq S \times T$.\begin{footnote}{Notice that the sup-norm difference between $\mu$ and $\mu'$ over all rectangles $E \times F \subseteq S \times T$ is equivalent to the sup-norm difference between $\mu$ and $\mu'$ over all events because rectangles generate the product $\sigma$-algebra on $S \times T$.}\end{footnote}
\end{enumerate}
Now, define a function mapping pairs of common priors to real numbers: $$\rho^{KM}(\mu,\mu'):= \max\{\rho^{KM}_1(\mu,\mu'), \rho^{KM}_1(\mu',\mu), \rho^{KM}_0(\mu,\mu')\},$$ where $\rho^{KM}_1(\mu,\mu')$ is an ex-ante measure of KM states under which the event $A_{\mu, \mu'}(\epsilon)$ is common $(1-\epsilon)$-belief,
$$\rho^{KM}_1(\mu,\mu') := \inf \{ \epsilon: \mu' (C^{1-\epsilon}_{\mu'} (A_{\mu, \mu'}(\epsilon))\cap A_{\mu, \mu'}(\epsilon)) \geq 1-\epsilon\},$$
and $\rho^{KM}_0(\mu,\mu')$ is the sup-norm distance between $\mu$ and $\mu'$,
$$\rho^{KM}_0(\mu,\mu'):= \underset{E \times F \subseteq S \times T}{\sup}~|\mu(E \times F)-\mu'(E \times F)|.$$ 
For each $\mu \in \Delta(S \times T)$ and $r>0$, define the $r$-neighborhood of $\mu$ by \[B_{\rho^{KM}}(\mu, r):= \{ \mu' \in \Delta(S \times T): \rho^{KM}(\mu, \mu') < r \}. \] Then, a subset of common priors, $O \subseteq \Delta(S \times T)$, is open in the extended \cite{kajii1998payoff} topology if and only if, for all $\mu \in O$, there exists an $r>0$ such that $B_{\rho^{KM}}(\mu, r) \subseteq O$.\begin{footnote}{In the definition of $\rho^{KM}_1(\mu,\mu')$, we require
\[\mu' (C^{1-\epsilon}_{\mu'} (A_{\mu, \mu'}(\epsilon)) \cap A_{\mu, \mu'}(\epsilon)) \geq 1-\epsilon,\]
whereas \cite{kajii1998payoff} require
\[\mu' (C^{1-\epsilon}_{\mu'} (A_{\mu, \mu'}(\epsilon))) \geq 1-\epsilon.\] 
Nevertheless, Equation (5.11) in \cite{MS1996} establishes
		\[\mu' (C^{1-\epsilon}_{\mu'} (E) \cap E) \geq (1-2 \epsilon) \mu' (C^{1-\epsilon}_{\mu'} (E))\]
		for any $E \subseteq S \times T$. It follows that the two distances induce the same topology.}\end{footnote} 

Using $\rho^{KM}$, we define the trace of the KM topology on the subspace of partitional information structures. 
More precisely, given a measurable function $g: S \rightarrow T$, say that a common prior $\mu \in \Delta(S \times T)$ is \textbf{$g$-consistent} if, for any event $E \subseteq S \times T$,
\begin{equation} \mu(E) = P((\iota \times g)^{-1}(E)). \label{pushforward} \end{equation}
Say that a measurable function $\tau: S \rightarrow T$ is a \textbf{$\Pi$-labeling} if, for any $s,s' \in S$, $\tau_i(s)=\tau_i(s')$ if and only if $\Pi_i(s)=\Pi_i(s')$. Then, a common prior $\mu$ is \textbf{$\Pi$-consistent} if it is $\tau$-consistent under the $\Pi$-labeling $\tau: S \rightarrow T$. Let $\mathcal{C}$ denote the set of common priors $\mu \in \Delta(S \times T)$ for which there exists a $\Pi \in \mathcal{P}^N$ such that $\mu$ is $\Pi$-consistent, i.e., the space of partitional information structures.

We now define the trace of the extended \cite{kajii1998payoff} topology on $\mathcal{C}$. Specifically, let $d^{KM}:=\rho^{KM}|_{\mathcal{C} \times \mathcal{C}}$ be the restriction of $\rho^{KM}$ to the domain $\mathcal{C} \times \mathcal{C}$, i.e., for any $\mu, \mu' \in \mathcal{C}$, $d^{KM}(\mu, \mu')= \rho^{KM}(\mu, \mu')$. Moreover, for each $\mu \in \mathcal{C}$ and $r>0$, define the $r$-neighborhood of $\mu$ in $\mathcal{C}$ under $d^{KM}$ by \[B_{d^{KM}}(\mu, r):= \{ \mu' \in \mathcal{C}: d^{KM}(\mu, \mu') < r \}. \] Then, a subset of common priors, $O \subseteq \mathcal{C}$, is open in the trace of the extended \cite{kajii1998payoff} topology on $\mathcal{C}$ (henceforth, the \textbf{KM topology}) if and only if, for all $\mu \in O$, there exists an $r>0$ such that $B_{d^{KM}}(\mu, r)\subseteq O$.

\subsection{Payoff Continuity Results}\label{sec_payoffcont_results}

We point out here that \cite{MS1996} and \cite{kajii1998payoff} prove payoff continuity in their respective topologies with respect to different classes of games and with respect to different notions of proximity of $\epsilon$- Bayesian Nash equilibrium payoffs.

Regarding the classes of games, let $A_i$ be an action set for player $i$ that is finite and has at least two elements. In addition, let $A:= A_1 \times \cdots \times A_N$ be the set of all action profiles. Theorem 5.4 of \cite{MS1996} establishes upper- and lower- hemicontinuity of $\epsilon$- Bayesian Nash equilibrium payoffs in their topology across all bounded games in which each player $i$ has a utility function $u_i: A \times S \rightarrow \mathbb{R}$. In contrast, Proposition 8 of \cite{kajii1998payoff}, whose proof is corrected by \cite{rothschild2005payoff},\begin{footnote}{Specifically, \cite{rothschild2005payoff} shows that Lemma 4 can be replaced with a stronger Lemma that makes the proof of Proposition 5 correct as stated.}\end{footnote} establishes upper- and lower- hemicontinuity of $\epsilon$- Bayesian Nash equilibrium payoffs across all bounded games in which each player $i$ has a utility function $u_i: A \times S \times T \rightarrow \mathbb{R}$ and $S$ is assumed to be countable.
Although $S$ is taken to be countable in \cite{kajii1998payoff} and \cite{rothschild2005payoff}, this restriction plays no essential role. The corrected proof of Proposition 8 extends verbatim to the case in which $(S, \Sigma)$ is an arbitrary measurable space and all (bounded) utility functions are measurable with respect to $\Sigma$, once sums are replaced with Lebesgue integrals.\begin{footnote}{Note that the class of games for which Proposition 8 holds thus depends on $\Sigma$, the $\sigma$-algebra on $S$.}\end{footnote} In particular, the extended \cite{kajii1998payoff} topology introduced in this paper is the coarsest topology preserving $\epsilon$- Bayesian Nash equilibrium payoff continuity across all bounded games with a possibly non-countable state space and utility functions that depend on both states and types.

An additional difference between the two results concerns the measurement of the distance between payoffs. \cite{MS1996} measure the distance between payoffs by taking the expectation of absolute differences, whereas \cite{kajii1998payoff} measure it by the absolute difference of expectations. Because the absolute difference of expectations is smaller than the expectation of absolute differences, i.e., for any bounded random variables $X$ and $Y$,
\[| \mathbb{E}[X]-\mathbb{E}[Y] | \leq \mathbb{E}[|X-Y|], \] the continuity requirement in \cite{MS1996} is stronger than in \cite{kajii1998payoff}.\begin{footnote}{To prove the inequality, simply note that, by linearity of expectations, $| \mathbb{E}[X]-\mathbb{E}[Y] | =| \mathbb{E}[X-Y] |$. Then, because the absolute value function is convex, Jensen's inequality implies that $| \mathbb{E}[X-Y] | \leq \mathbb{E}[|X-Y|]$.}\end{footnote}

Even though \cite{kajii1998payoff} allow for a larger class of games in which types enter each player's utility function, our main result shows that convergence in the MS topology implies convergence in the extended KM topology whenever type profiles with similar beliefs across partition profiles are mapped to the same element of $T$. The crucial property of the mapping is that when a player's ``belief type" is perturbed in the MS topology, her ``payoff type" remains unchanged. Thus, differences in payoffs cannot be artificially generated through the utility function in the \cite{kajii1998payoff} framework. Conversely, despite the stronger continuity requirement in \cite{MS1996}, our main result also shows that convergence of common priors in the KM topology implies convergence of any ``belief consistent" sequence of partition profiles in the MS topology.

\section{Main Result}

The main result, Theorem \ref{thm_1}, establishes the existence of a distance-preserving map with respect to every partition profile, $\Pi$, in the MS topology. Specifically, the map sends every partition profile $\Pi'$ to a $\Pi'$-consistent common prior $\mu'$. In addition, partition profile labelings are defined so that each partition element in $\Pi'_i$ that is ``close" to a partition element in $\Pi_i$ is mapped to the same type, $t_i$. This property, which we call the common support condition, ensures that partition profiles $\Pi$ and $\Pi'$ are close in the MS topology if and only if the associated common priors $\mu$ and $\mu'$ are close in the KM topology.

\begin{theorem}\label{thm_1}
Fix a partition profile $\Pi \in \mathcal{P}^N$ and a number $\gamma \in (0,1/2)$. The following properties hold:
\begin{enumerate}
\item There exists a function $f_{(\Pi,\gamma)}: \mathcal{P}^N \rightarrow \Delta(S \times T)$ such that, for every $\Pi' \in \mathcal{P}^N$, $f_{(\Pi,\gamma)}(\Pi')$ is $\Pi'$-consistent, and, for any $\epsilon \in (0,\gamma]$,
\begin{enumerate}
    \item if $d^{MS}(\Pi, \Pi') \leq \epsilon$, then $d^{KM}(f_{(\Pi,\gamma)}(\Pi),f_{(\Pi,\gamma)}(\Pi')) \leq \epsilon$, and
    \item if $d^{KM}(f_{(\Pi,\gamma)}(\Pi),f_{(\Pi,\gamma)}(\Pi')) \leq \epsilon$, then $d^{MS}(\Pi, \Pi') \leq (N+1) \epsilon$.
\end{enumerate}

\item For any $\epsilon \in (0,\gamma]$, if there exists a $\Pi$-consistent common prior $\mu$ and a $\Pi'$-consistent common prior $\mu'$ such that $d^{KM}(\mu ,\mu') \leq   \epsilon$, then $d^{MS}(\Pi, \Pi') \leq (N+1) \epsilon$.
\end{enumerate}
\end{theorem}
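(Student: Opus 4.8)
The plan is to bound each of $d^{MS}_1(\Pi,\Pi')$ and $d^{MS}_1(\Pi',\Pi)$ by $(N+1)\epsilon$; since $d^{MS}$ is their maximum this suffices, and the two bounds are symmetric, so I focus on the first. Fix a $\Pi$-labeling $\tau$ and a $\Pi'$-labeling $\tau'$ witnessing the consistency of $\mu$ and $\mu'$, and let $\phi'\colon S\to S\times T$, $\phi'(s)=(s,\tau'(s))$, so that $\mu'=(\iota\times\tau')_{*}P$ and $\phi'$ is a measure isomorphism from $(S,P)$ onto $(\operatorname{supp}\mu',\mu')$. The first step is to show that the belief operators commute with this pull-back: conditioning on a type $t_i$ under $\mu'$ corresponds exactly to conditioning on the partition element $(\tau'_i)^{-1}(t_i)=\Pi'_i(s)$ under $P$, so $(\phi')^{-1}\big(B^{p}_{\mu'_i}(E)\big)=B^{p}_{\Pi'_i}\big((\phi')^{-1}(E)\big)$ for every $E\subseteq S\times T$, and hence $(\phi')^{-1}\big(C^{p}_{\mu'}(E)\big)=C^{p}_{\Pi'}\big((\phi')^{-1}(E)\big)$ by induction on the iteration defining common belief.

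Second, I pull the KM hypothesis back to $S$. From $\rho^{KM}_1(\mu,\mu')\le\epsilon$ we have $\mu'\big(C^{1-\epsilon}_{\mu'}(A_{\mu,\mu'}(\epsilon))\cap A_{\mu,\mu'}(\epsilon)\big)\ge 1-\epsilon$. Writing $\tilde A:=(\phi')^{-1}(A_{\mu,\mu'}(\epsilon))$ and applying Step 1 together with $P\circ(\phi')^{-1}=\mu'$ gives $P\big(C^{1-\epsilon}_{\Pi'}(\tilde A)\cap\tilde A\big)\ge 1-\epsilon$. Since player $i$'s clause in $A_{\mu,\mu'}(\epsilon)$ depends only on $t_i$, the set $\tilde A$ is an intersection $\cap_i\tilde A_i$ of sets $\tilde A_i$ that are unions of $\Pi'_i$-elements; a short argument then shows $C^{1-\epsilon}_{\Pi'}(\tilde A)\subseteq\tilde A$, so in fact $P\big(C^{1-\epsilon}_{\Pi'}(\tilde A)\big)\ge 1-\epsilon$.

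Third, I relate $\tilde A$ to $I_{\Pi,\Pi'}$. For $s\in\tilde A$ with $t_i=\tau'_i(s)$, clause (2) of $A_{\mu,\mu'}(\epsilon)$ with $F=T$ reads $|P(E\mid\tau_i^{-1}(t_i))-P(E\mid\Pi'_i(s))|\le\epsilon$ for all $E\subseteq S$; taking $E=\tau_i^{-1}(t_i)$ and $E=\Pi'_i(s)$ yields $P(\tau_i^{-1}(t_i)\mid\Pi'_i(s))\ge 1-\epsilon$ and $P(\Pi'_i(s)\mid\tau_i^{-1}(t_i))\ge 1-\epsilon$. On the label-agreement set $G_i:=\{s:\tau_i(s)=\tau'_i(s)\}$ one has $\tau_i^{-1}(t_i)=\Pi_i(s)$, so these two inequalities become precisely the defining inequalities of $I_{\Pi,\Pi'}(\epsilon)$ for player $i$; hence $\tilde A\cap G\subseteq I_{\Pi,\Pi'}(\epsilon)$, where $G:=\cap_i G_i$. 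The same inequality $P(\tau_i^{-1}(t_i)\mid\Pi'_i(s))\ge 1-\epsilon$ shows that within each good $\Pi'_i$-element the mismatch $\{\tau_i\ne\tau'_i\}$ has conditional probability at most $\epsilon$, so summing over elements gives $P(\tilde A_i\setminus G_i)\le\epsilon$ and, by a union bound, $P(\tilde A\setminus G)\le N\epsilon$.

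Finally, I assemble the pieces. Since $\tilde A\cap G\subseteq I_{\Pi,\Pi'}(\epsilon)\subseteq I_{\Pi,\Pi'}((N+1)\epsilon)$, monotonicity of common belief lets me transport the common-belief statement of Step 2 to $I_{\Pi,\Pi'}((N+1)\epsilon)$, while $P(\tilde A\setminus G)\le N\epsilon$ controls the measure loss, giving $P\big(C^{1-\epsilon'}_{\Pi'}(I_{\Pi,\Pi'}(\epsilon'))\cap I_{\Pi,\Pi'}(\epsilon')\big)\ge 1-\epsilon'$ with $\epsilon'=(N+1)\epsilon$, i.e.\ $d^{MS}_1(\Pi,\Pi')\le(N+1)\epsilon$. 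The symmetric bound $d^{MS}_1(\Pi',\Pi)\le(N+1)\epsilon$ then follows by repeating the argument through $\phi(s)=(s,\tau(s))$ and $\rho^{KM}_1(\mu',\mu)\le\epsilon$, using $A_{\mu',\mu}=A_{\mu,\mu'}$. I expect this last step to be the main obstacle: a player believing $I_{\Pi,\Pi'}$ requires her to believe that \emph{every} player's labels agree, whereas the pull-back of $\rho^{KM}_1$ directly controls only each player's belief about her own label, and a cross-player mismatch can concentrate inside a single conditioning element. Overcoming this requires exploiting the full, all-orders evident structure of $C^{1-\epsilon}_{\Pi'}(\tilde A)$ rather than a single belief step, removing player by player the $\Pi'$-measurable sets on which some player assigns non-negligible probability to a label mismatch, so that the ex-ante union bound caps the removed mass while evidence is preserved, producing the $(N+1)\epsilon$ as a base deficit of $\epsilon$ plus one $\epsilon$ per player.
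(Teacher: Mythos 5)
Your proposal addresses only half of the theorem. Parts 1(b) and 2 (the direction from $d^{KM}\le\epsilon$ to $d^{MS}\le(N+1)\epsilon$) are what you sketch; you never construct the map $f_{(\Pi,\gamma)}$, whose existence is the content of part 1, and you never prove part 1(a), the direction from $d^{MS}\le\epsilon$ to $d^{KM}\le\epsilon$. The construction is not a formality: the paper builds $f_{(\Pi,\gamma)}(\Pi')$ from a $\Pi'$-labeling $\tau'$ that reuses the type label $\alpha_i(\Pi_i(s))$ whenever $\Pi'_i(s)$ overlaps a $\Pi_i$-element by more than half its mass (this requires $\gamma<1/2$ to make the matching $\beta_i$ well-defined and injective), yielding the ``common support condition'' $\tau(s)=\tau'(s)$ on $I_{\Pi,\Pi'}(\gamma)$. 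That condition is the engine of part 1(a): it kills the term $|P(E\cap I\cap\tau^{-1}(F))-P(E\cap I\cap(\tau')^{-1}(F))|$ in the bound on $\rho^{KM}_0$ and lets the paper push $I_{\Pi,\Pi'}(\epsilon)$ forward into $A_{\mu,\mu'}(\epsilon)$ and $C^{1-\epsilon}_{\Pi'}(I)$ into $C^{1-\epsilon}_{\mu'}(A)$. None of this appears in your write-up, so part 1 is unproven as stated.

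For the half you do treat, your Steps 1--3 are sound and essentially match the paper's Lemmas \ref{prob_eq2} and \ref{conditional2}: the pull-back identity for belief operators, $\tilde A\cap G\subseteq I_{\Pi,\Pi'}(\epsilon)$, and $P(\tilde A_i\setminus G_i)\le\epsilon$ are all correct. But the assembly step is exactly where you stop. Knowing $P(C^{1-\epsilon}_{\Pi'}(\tilde A))\ge1-\epsilon$ and $\tilde A\cap G\subseteq I_{\Pi,\Pi'}(\epsilon)$ does not let you ``transport'' common belief to $I_{\Pi,\Pi'}((N+1)\epsilon)$ by monotonicity, because $I_{\Pi,\Pi'}(\epsilon)$ contains only $\tilde A\cap G$, not $\tilde A$, and common belief is monotone in the wrong direction for passing to a subset. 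You name this obstacle yourself but your proposed fix (``removing player by player the $\Pi'$-measurable sets\ldots'') is a gesture, not an argument. The paper's resolution is an explicit induction on the order of mutual belief (Lemma \ref{consist2}): at each order, for a state with $\tau(s)=\tau'(s)$, one subtracts only player $i$'s own mismatch mass $P(D_i\mid\Pi'_i(s))\le\epsilon$ from the pulled-back belief $P((\iota\times\tau')^{-1}((B^{1-\epsilon}_{\mu'})^m(A))\mid\Pi'_i(s))\ge1-\epsilon$, degrading the belief level to $1-2\epsilon$ uniformly in $m$; the conclusion $s\in C^{1-2\epsilon}_{\Pi'}(I_{\Pi,\Pi'}(\epsilon))$ is then absorbed into $C^{1-(N+1)\epsilon}_{\Pi'}(I_{\Pi,\Pi'}((N+1)\epsilon))$ since $N\ge2$, while the ex-ante loss $P(D\cap A)\le N\epsilon$ plus the base $\epsilon$ gives the $1-(N+1)\epsilon$ probability bound. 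You should carry out that induction (or an equivalent) explicitly; as written, the key step of the only direction you attempt is missing.
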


The first part of Theorem \ref{thm_1} establishes that both $f_{(\Pi,\gamma)}$ and its inverse $f^{-1}_{(\Pi,\gamma)}$ are Lipschitz continuous at $\Pi$ under the distances $d^{MS}$ and $d^{KM}$. Consequently, we observe that convergent sequences of partition profiles in the MS topology are mapped by $f_{(\Pi,\gamma)}$ to convergent sequences of common priors in the KM topology. A converse statement holds for sequences of common priors that belong to $f_{(\Pi,\gamma)}(\mathcal{P}^N)$.

\begin{corollary}[Convergence under $f_{(\Pi,\gamma)}$]
If a sequence of partition profiles $(\Pi_n)$ converges to $\Pi$ in the MS topology, then the sequence of common priors $(f_{(\Pi,\gamma)}(\Pi_n))$ converges to $f_{(\Pi,\gamma)}(\Pi)$ in the KM topology. Moreover, if a sequence of common priors $(f_{(\Pi,\gamma)}(\Pi_n))$ converges to $f_{(\Pi,\gamma)}(\Pi)$ in the KM topology, then the sequence of partition profiles $(\Pi_n)$ converges to $\Pi$ in the MS topology.
\end{corollary}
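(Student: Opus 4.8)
The plan is to deduce the corollary directly from the two Lipschitz-type inequalities in part 1 of Theorem \ref{thm_1}, after first reducing topological convergence to convergence of the relevant distances to zero. As a preliminary, I would record that in both the MS and KM topologies—each generated by the balls of $d^{MS}$ and $d^{KM}$, and each coinciding with a genuine pseudometric topology by the remarks in Section \ref{prelim}—a sequence converges to a point if and only if its distance to that point tends to zero. Thus ``$\Pi_n \to \Pi$ in the MS topology'' is equivalent to $d^{MS}(\Pi, \Pi_n) \to 0$, and ``$f_{(\Pi,\gamma)}(\Pi_n) \to f_{(\Pi,\gamma)}(\Pi)$ in the KM topology'' is equivalent to $d^{KM}(f_{(\Pi,\gamma)}(\Pi), f_{(\Pi,\gamma)}(\Pi_n)) \to 0$. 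With this reduction, the corollary becomes a statement purely about the two real sequences $d^{MS}(\Pi, \Pi_n)$ and $d^{KM}(f_{(\Pi,\gamma)}(\Pi), f_{(\Pi,\gamma)}(\Pi_n))$.

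For the forward implication, I would suppose $d^{MS}(\Pi, \Pi_n) \to 0$, fix a tolerance $\delta > 0$, and set $\delta' := \min\{\delta, \gamma\} \in (0, \gamma]$. By convergence there is an $n_0$ with $d^{MS}(\Pi, \Pi_n) \le \delta'$ for all $n \ge n_0$, so part 1(a) of Theorem \ref{thm_1} (applied with $\epsilon = \delta'$) yields $d^{KM}(f_{(\Pi,\gamma)}(\Pi), f_{(\Pi,\gamma)}(\Pi_n)) \le \delta' \le \delta$ for all $n \ge n_0$. Since $\delta$ was arbitrary, the KM distance tends to zero. The reverse implication is symmetric: assuming $d^{KM}(f_{(\Pi,\gamma)}(\Pi), f_{(\Pi,\gamma)}(\Pi_n)) \to 0$, I fix $\delta > 0$, set $\epsilon := \min\{\delta/(N+1), \gamma\} \in (0,\gamma]$, and choose $n_0$ with $d^{KM}(f_{(\Pi,\gamma)}(\Pi), f_{(\Pi,\gamma)}(\Pi_n)) \le \epsilon$ for $n \ge n_0$; then part 1(b) gives $d^{MS}(\Pi, \Pi_n) \le (N+1)\epsilon \le \delta$, so $d^{MS}(\Pi, \Pi_n) \to 0$.

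The only points requiring care are bookkeeping rather than substance. First, the inequalities in Theorem \ref{thm_1} are guaranteed only for $\epsilon \in (0, \gamma]$, so I truncate each tolerance at $\gamma$; this is harmless because convergence concerns the tail, where the distances are eventually far below $\gamma$ and the bounds apply. Second, I expect the genuinely delicate step to be the preliminary reduction from topological convergence to distance convergence: since the neighborhoods are defined through the balls of $d^{MS}$ and $d^{KM}$, one must confirm that these balls form a neighborhood basis at each point, so that topological convergence forces the distance to the limit to vanish. This is exactly what the pseudometrizability of the two topologies (Theorem 5.2 of \cite{MS1996} on the MS side, and the corresponding construction in \cite{kajii1998payoff} on the KM side) secures, once one checks that the underlying pseudometric and the distance used here induce the same convergent sequences.
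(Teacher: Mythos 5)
Your proposal is correct and matches the paper's treatment: the paper states the corollary as an immediate consequence of the Lipschitz bounds in parts 1(a) and 1(b) of Theorem \ref{thm_1} and gives no separate proof, and your argument simply makes the $\epsilon$-bookkeeping (truncating tolerances at $\gamma$, dividing by $N+1$) explicit. Your flagged concern about passing between topological convergence and $d^{MS}\to 0$ (resp.\ $d^{KM}\to 0$) is the right one to raise, and it is resolved exactly as you indicate, via the equivalence with the pseudometric topologies noted in Section \ref{prelim}.
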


The first part of Theorem \ref{thm_1} also immediately establishes one direction of \cite{kajii1998payoff}'s conjecture: for any $\epsilon \in (0,\gamma]$, if two partition profiles are in an $\epsilon$-neighborhood in the MS topology, then $f_{(\Pi,\gamma)}$ labels them so that the associated common priors are in an $\epsilon$-neighborhood in the KM topology. 
The second part of Theorem \ref{thm_1} formalizes and establishes the other direction: for any $\epsilon \in (0, \gamma]$, if it is possible to label partition profiles in such a way that the associated common priors are in an $\epsilon$-neighborhood in the KM topology, then these partition profiles must be in an $(N+1) \epsilon$-neighborhood in the MS topology. 

\section{Proof of Main Result}

The outline of the proof of Theorem \ref{thm_1} is as follows. We first construct a function $f_{(\Pi,\gamma)}: \mathcal{P}^N \rightarrow \Delta(S \times T)$ such that, for every $\Pi' \in \mathcal{P}^N$, $f_{(\Pi,\gamma)}(\Pi')$ is $\Pi'$-consistent (Section \ref{construct}). We then establish part 1(a) of Theorem \ref{thm_1} using this function (Section \ref{pf_part1a}). Finally, we jointly establish parts 1(b) and 2 of Theorem \ref{thm_1} (Section \ref{pf_part1b2}).

\subsection{Construction of $f_{(\Pi,\gamma)}$ and the Common Support Condition}\label{construct}

Fix a partition profile $\Pi \in \mathcal{P}^N$. We first define $f_{(\Pi,\gamma)}(\Pi) \in \Delta(S \times T)$. For each $i \in \mathcal{N}$, partition $T_i$ into countably infinite sets $X_i$ and $Y_i$. Let $\alpha_i: \Pi_i \rightarrow X_i$ be a one-to-one function. Such a function exists because $\Pi_i$ is countable and $X_i$ is countably infinite. Now, define $\tau: S \rightarrow T$ to be the $\Pi$-labeling
\[ \tau(s) := (\alpha_1(\Pi_1(s)), \ldots , \alpha_N(\Pi_N(s))). \] Then, set $f_{(\Pi,\gamma)}(\Pi)$ equal to the unique $\tau$-consistent common prior $\mu$.

We now define $f_{(\Pi,\gamma)}(\Pi')$ for each partition profile $\Pi' \neq \Pi$. For this purpose, let $\tilde{\Pi}_i := \{\text{$\pi \in \Pi_i$: $\pi \cap I_{\Pi,\Pi'}(\gamma) \neq \emptyset$}\}$. We claim that the function $\beta_i: \tilde{\Pi'}_i \rightarrow \tilde{\Pi}_i$ with $\beta_i(\Pi'_i(s))=\Pi_i(s)$ for all $s \in I_{\Pi,\Pi'}(\gamma)$ is well-defined and one-to-one. To prove $\beta_i$ is well-defined, suppose, towards contradiction, that there exist $s,s' \in I_{\Pi,\Pi'}(\gamma)$ such that $\Pi'_i(s)=\Pi'_i(s'):=\pi'$ and $\beta_i(\Pi'_i(s))=\Pi_i(s) \neq \Pi_i(s')=\beta_i(\Pi'_i(s'))$. Since $P(\pi'\backslash \Pi_i(s)  | \pi')+P(\Pi_i(s) \cap \pi' | \pi')=1$ and $s \in I_{\Pi,\Pi'}(\gamma)$ implies $P(\pi'\backslash \Pi_i(s)  | \pi') \leq \gamma < 1/2$, $P(\Pi_i(s) \cap \pi' | \pi')>1/2$ (note that the strict inequality is ensured by choosing $\gamma \in (0,1/2)$). Similarly, $P(\Pi_i(s') \cap \pi' | \pi')>1/2$. But because $\Pi_i(s) \cap \pi'$ and $\Pi_i(s') \cap \pi'$ are disjoint events, we must then have $P(\Pi_i(s) \cap \pi' | \pi')+P(\Pi_i(s') \cap \pi' | \pi')>1$, a contradiction. To prove $\beta_i$ is one-to-one, take two distinct partition elements  $\pi'_1,\pi'_2 \in \tilde{\Pi'}_i$. Suppose, towards contradiction, that $\beta_i(\pi'_1)=\beta_i(\pi'_2) := \pi \in \tilde{\Pi}_i$. Since $P(\pi \backslash \pi'_1| \pi)+ P(\pi \cap \pi'_1| \pi)=1$ and $\pi \in \tilde{\Pi}_i$ implies $P(\pi \backslash \pi'_1| \pi) \leq \gamma < 1/2$, $P(\pi \cap \pi'_1| \pi)>1/2$. Similarly, $P(\pi \cap \pi'_2| \pi)>1/2$. But because $\pi \cap \pi'_1$ and $\pi \cap \pi'_2$ are disjoint events, we must then have $P(\pi \cap \pi'_1|\pi)+P(\pi \cap \pi'_2|\pi)>1$, a contradiction. 

Now, let $B_i:=  \{ s \in S: \Pi'_i(s) \cap I_{\Pi,\Pi'}(\gamma) \neq \emptyset \}$ and $\alpha'_i: \Pi'_i \backslash \tilde{\Pi'}_i \rightarrow Y_i$ be a one-to-one function, where such a function $\alpha'_i$ once again exists because $\Pi'_i$ is countable and $Y_i$ is countably infinite. Define the $\Pi'$-labeling $\tau': S \rightarrow T$ by \[\tau'_i(s) := \begin{cases} \alpha_i ( \beta_i(\Pi'_i(s))) & \text{if $s \in B_i$} \\ \alpha'_i(\Pi'_i(s)) & \text{otherwise.} \end{cases}\] Then, set $f_{(\Pi,\gamma)}(\Pi')$ equal to the unique $\tau'$-consistent common prior $\mu'$.

From the preceding paragraphs, we have obtained a function $f_{(\Pi,\gamma)}$ such that, for every $\Pi' \in \mathcal{P}^N$, $f_{(\Pi,\gamma)}(\Pi')$ is $\Pi'$-consistent. This function also possesses a crucial property called the \textbf{common support condition (CSC)}: if $\mu=f_{(\Pi,\gamma)}(\Pi)$ is $\tau$-consistent with $\Pi$-labeling $\tau$ and $\mu'=f_{(\Pi,\gamma)}(\Pi')$ is $\tau'$-consistent with $\Pi'$-labeling $\tau'$, then $s \in I_{\Pi, \Pi'}(\gamma)$ implies $\tau(s)=\tau'(s)$. A simple example illustrates the condition.

\begin{figure}[!ht]
	\centering
	\subfloat[$f_{(\Pi,\gamma)}$ violates CSC.]{
		\begin{tikzpicture}[ every node/.style={align=center}]

\def\labx{-3.4}
\node[anchor=east] (LabPi1)   at (\labx,2.75)    {$P$};
\node[anchor=east] (LabPi1)   at (\labx,2)    {$\Pi_1$};
\node[anchor=east] (LabTau1)  at (\labx,1.25) {$\tau_1$};
\node[anchor=east] (LabT1)    at (\labx,0.5)  {$T_1$};
\node[anchor=east] (LabTau1p) at (\labx,-0.25){$\tau'_1$};
\node[anchor=east] (LabPi1p)  at (\labx,-1.0) {$\Pi'_1$};

\node (pH) at (-2,2.75) {$1-\epsilon$};
\node (pL) at (0,2.75) {$\epsilon$};
\node (sH) at (-2,2) {$s_H$};
\node (sL) at ( 0,2) {$s_L$};

\draw (-1,2) ellipse (1.6cm and 0.45cm);

\node (H)  at (-2,0.5) {$H$};
\node (Hp) at ( 0,0.5) {$L$};

\node[draw,circle,minimum size=0.76cm] (sHp) at (-2,-1) {$s_H$};
\node[draw,circle,minimum size=0.76cm] (sLp) at ( 0,-1) {$s_L$};

\draw[->] ($(sH.south)+(1,-0.25)$) -- ($(H.north)+(0.2,0.02)$);

\draw[->] ($(sHp.north)+(0,0.04)$) -- ( $(Hp.south)+(-0.2,0.1)$);
\draw[->] ($(sLp.north)+(0,0.04)$) -- ($(H.south)+(0.2,0.1)$);

\end{tikzpicture}\label{CSC_1}}
\quad \quad \quad \quad \subfloat[$f_{(\Pi,\gamma)}$ does not violate CSC.]{\begin{tikzpicture}[ every node/.style={align=center}]

\def\labx{-3.4}
\node[anchor=east] (LabPi1)   at (\labx,2.75)    {$P$};
\node[anchor=east] (LabPi1)   at (\labx,2)    {$\Pi_1$};
\node[anchor=east] (LabTau1)  at (\labx,1.25) {$\tau_1$};
\node[anchor=east] (LabT1)    at (\labx,0.5)  {$T_1$};
\node[anchor=east] (LabTau1p) at (\labx,-0.25){$\tau'_1$};
\node[anchor=east] (LabPi1p)  at (\labx,-1.0) {$\Pi'_1$};

\node (pH) at (-2,2.75) {$1-\epsilon$};
\node (pL) at (0,2.75) {$\epsilon$};
\node (sH) at (-2,2) {$s_H$};
\node (sL) at ( 0,2) {$s_L$};

\draw (-1,2) ellipse (1.6cm and 0.45cm);

\node (H)  at (-2,0.5) {$H$};
\node (Hp) at ( 0,0.5) {$L$};

\node[draw,circle,minimum size=0.76cm] (sHp) at (-2,-1) {$s_H$};
\node[draw,circle,minimum size=0.76cm] (sLp) at ( 0,-1) {$s_L$};

\draw[->] ($(sH.south)+(1,-0.25)$) -- ($(H.north)+(0.2,0.02)$);

\draw[->] ($(sHp.north)+(0,0.04)$) -- ($(H.south)+(0,-0.02)$);
\draw[->] ($(sLp.north)+(0,0.04)$) -- ($(Hp.south)+(0,-0.02)$);

\end{tikzpicture}\label{CSC_2}}
	\caption{Illustration of the common support condition (CSC).}\label{CSC_exmp}
\end{figure}
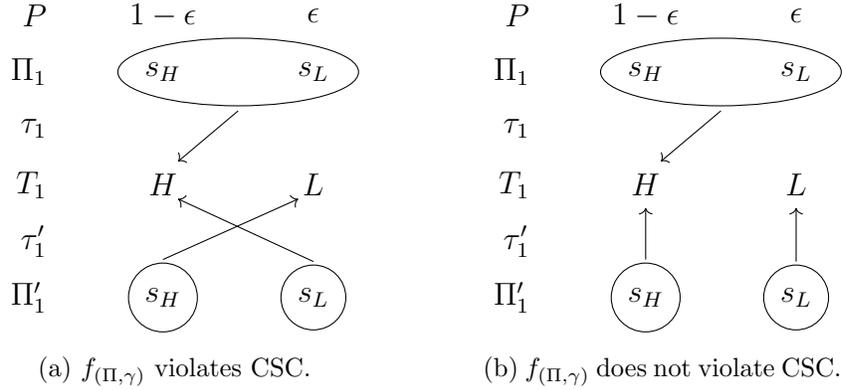

\begin{exmp}
In Figure \ref{CSC_exmp}, for simplicity of exposition, the state space is binary: $S:=\{s_H, s_L \}$. The prior distribution $P$ places  probability $\epsilon \in (0, \gamma]$ on $s_{L}$ and $1-\epsilon$ on $s_{H}$. The partitions for player 1 are $\Pi_1:=\{ \{s_{H}, s_{L} \}\}$ and $\Pi'_1:=\{ \{s_{H}\}, \{s_{L} \}\}$. Let $\mu=f_{(\Pi,\gamma)}(\Pi)$ be $\tau$-consistent with the $\Pi$-labeling $\tau$ satisfying $\tau_1(s_H)=\tau_1(s_L)=H$ and $\mu'=f_{(\Pi,\gamma)}(\Pi')$ be $\tau'$-consistent with some $\Pi'$-labeling $\tau'$. In Figure \ref{CSC_1}, $\tau'$ is chosen so that $f_{(\Pi,\gamma)}$ violates the common support condition; in particular, $\tau'_1(s_H)=L$ and $\tau'_1(s_L)=H$. To see why the condition is violated, notice that $$I_{\Pi,\Pi'}(\gamma)= \{s_{H} \}$$ because $$P(\Pi_1(s_{H}) \backslash \Pi'_1(s_{H})|\Pi_1(s_{H}))= P(s_{L}|\Pi_1(s_{H}))=\epsilon \leq \gamma$$ and $$P(\Pi_1(s_{L}) \backslash \Pi'_1(s_{L})|\Pi_1(s_{L}))= P(s_{H}|\Pi_1(s_{L}))=1-\epsilon >\gamma.$$ Nevertheless, the $\Pi$-labeling $\tau$ does not coincide with the $\Pi'$-labeling $\tau'$ at $s_H$: $\tau_1(s_{H})=H \neq L= \tau'_1(s_{H})$. On the other hand, the portion of the $\Pi'$-labeling $\tau'$ depicted in Figure \ref{CSC_2}, which sets $\tau'_1(s_H)=H$ and $\tau'_1(s_L)=L$, does not violate the common support condition because $\tau_1(s_{H})=H=\tau'_1(s_{H})$.
\end{exmp}

\subsection{Proof of Theorem \ref{thm_1}: part 1(a)}\label{pf_part1a}

We now show that if $\Pi$ and $\Pi'$ are close in the MS topology, then $f_{(\Pi,\gamma)}(\Pi)$ and $f_{(\Pi,\gamma)}(\Pi')$ are close in the KM topology. Formally, for $\epsilon \in (0,\gamma]$, $d^{MS}(\Pi, \Pi') \leq \epsilon$ implies $d^{KM}(f_{(\Pi,\gamma)}(\Pi) , f_{(\Pi,\gamma)}(\Pi')) \leq  \epsilon$.

A preliminary Lemma establishes the precise sense under which first-order beliefs of each type cohere with those in an associated partition model.

\begin{lemma}\label{prob_eq}
For any event $E \subseteq S$, $\Pi'$-labeling $\tau'$, and $\tau'$-consistent common prior $\mu'$, if $s \in S$, then
\begin{equation} P(E| \Pi'_i(s))=\mu'((\iota \times \tau')(E)| \tau'_i(s)). \notag \end{equation}

\end{lemma}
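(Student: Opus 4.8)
The plan is to unfold the definition of the conditional probability on the right-hand side and transport every quantity back to $S$ via the pushforward relation \eqref{pushforward}. Write $t_i := \tau'_i(s)$ and $F := (\iota \times \tau')(E)$. By the definition of $\mu'(\cdot \mid t_i)$,
$$\mu'(F \mid t_i) = \frac{\mu'\big(F \cap (S \times \{t_i\} \times T_{-i})\big)}{\mu'(t_i)},$$
so it suffices to show separately that the numerator equals $P(E \cap \Pi'_i(s))$ and the denominator equals $P(\Pi'_i(s))$. Since every element of a partition in $\mathcal{P}$ is non-null, $P(\Pi'_i(s))>0$; the computation below then also yields $\mu'(t_i)>0$, so the conditioning is well-defined.

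For the denominator, $\tau'$-consistency gives $\mu'(t_i) = \mu'(S \times \{t_i\} \times T_{-i}) = P\big((\iota \times \tau')^{-1}(S \times \{t_i\} \times T_{-i})\big)$. The key observation is that this preimage is exactly $\Pi'_i(s)$: a state $s'$ lies in it if and only if $\tau'_i(s') = t_i = \tau'_i(s)$, which by the defining property of a $\Pi'$-labeling holds if and only if $\Pi'_i(s') = \Pi'_i(s)$, i.e. $s' \in \Pi'_i(s)$. Hence $\mu'(t_i) = P(\Pi'_i(s))$.

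For the numerator, I would first note that $F$ is a genuine event: because $T$ is countable with the discrete $\sigma$-algebra, the slice of $F$ over each $t \in T$ is $E$ intersected with $\bigcap_{j \in \mathcal{N}} \{s' : \tau'_j(s') = t_j\}$, a finite intersection of partition elements, hence measurable. Applying $\tau'$-consistency again and reusing the denominator computation for one factor of the preimage,
$$\mu'\big(F \cap (S \times \{t_i\} \times T_{-i})\big) = P\big((\iota \times \tau')^{-1}(F) \cap \Pi'_i(s)\big).$$
The crucial point, and the one substantive step, is that $\iota \times \tau'$ is injective: its first coordinate is the identity $\iota$, so distinct states have distinct images. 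Injectivity gives $(\iota \times \tau')^{-1}(F) = (\iota \times \tau')^{-1}\big((\iota \times \tau')(E)\big) = E$, so the numerator equals $P(E \cap \Pi'_i(s))$. Combining the two computations yields
$$\mu'\big((\iota \times \tau')(E) \mid \tau'_i(s)\big) = \frac{P(E \cap \Pi'_i(s))}{P(\Pi'_i(s))} = P(E \mid \Pi'_i(s)),$$
as claimed. The argument is essentially bookkeeping around the pushforward identity; the only places requiring care are the recovery of $E$ from its image, which hinges entirely on the injectivity of $\iota \times \tau'$, and the verification that the image of a measurable set is again measurable, which is guaranteed by the countable, discretely measured type space.
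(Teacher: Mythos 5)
Your proof is correct and follows essentially the same route as the paper's: both identify $\Pi'_i(s)$ with $(\tau')^{-1}(\{\tau'_i(s)\}\times T_{-i})$, unfold the conditional probability, and transport numerator and denominator through the pushforward identity \eqref{pushforward}. The only cosmetic difference is that you recover $E$ from its image via injectivity of $\iota\times\tau'$, whereas the paper rewrites the image as the rectangle $E\times\{\tau'_i(s)\}\times\tau'_{-i}(E)$ and checks that the discrepancy is $\mu'$-null; your explicit verification that $(\iota\times\tau')(E)$ is measurable is a welcome detail the paper leaves implicit.
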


\begin{proof}
See Appendix \ref{proof_prob_eq}.
\end{proof}\medskip

\noindent Moreover, under the common support condition, the restrictions placed on conditional beliefs at partition elements in the MS topology bound the conditional beliefs of the types to which they are mapped.

\begin{lemma}\label{conditional}
Suppose $\mu =f_{(\Pi,\gamma)}(\Pi)$ is $\tau$-consistent with the $\Pi$-labeling $\tau$ and $\mu'=f_{(\Pi,\gamma)}(\Pi')$ is $\tau'$-consistent with the $\Pi'$-labeling $\tau'$.
For any $\epsilon \in (0,\gamma]$, if $s \in I_{\Pi,\Pi'}(\epsilon)$, then $(s,\tau'(s)) \in A_{\mu,\mu'}(\epsilon)$.
\end{lemma}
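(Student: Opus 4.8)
The plan is to verify directly, for the KM state $(s,\tau'(s))$ and each player $i \in \mathcal{N}$, the two conditions defining membership in $A_{\mu,\mu'}(\epsilon)$, writing $t_i := \tau'_i(s)$. The common support condition does the initial work. Since $\epsilon \leq \gamma$, we have $s \in I_{\Pi,\Pi'}(\epsilon) \subseteq I_{\Pi,\Pi'}(\gamma)$, so CSC gives $\tau(s) = \tau'(s)$; in particular $\tau_i(s) = \tau'_i(s) = t_i$. Because $\tau$ is a $\Pi$-labeling and $\tau'$ a $\Pi'$-labeling, $\{s'' : \tau_i(s'') = t_i\} = \Pi_i(s)$ and $\{s'' : \tau'_i(s'') = t_i\} = \Pi'_i(s)$, and the pushforward property~(\ref{pushforward}) then identifies the marginals $\mu(t_i) = P(\Pi_i(s))$ and $\mu'(t_i) = P(\Pi'_i(s))$, both strictly positive because partition elements are non-null. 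This settles condition 1.

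For condition 2 I would first extend Lemma \ref{prob_eq} from $S$-events to arbitrary events $G \subseteq S \times T$: since $\iota \times \tau'$ is injective with $(\iota \times \tau')^{-1}(S \times \{t_i\} \times T_{-i}) = \Pi'_i(s)$, $\tau'$-consistency yields $\mu'(G \mid t_i) = P((\iota \times \tau')^{-1}(G) \mid \Pi'_i(s))$, and symmetrically $\mu(G \mid t_i) = P((\iota \times \tau)^{-1}(G) \mid \Pi_i(s))$. Specializing to $G = E \times F$ reduces the desired inequality to
\[ \bigl| P(E \cap \tau^{-1}(F) \mid \Pi_i(s)) - P(E \cap (\tau')^{-1}(F) \mid \Pi'_i(s)) \bigr| \leq \epsilon \qquad \text{for all } E \subseteq S,\ F \subseteq T, \]
so the comparison of beliefs over $S \times T$ collapses to a comparison of two $P$-conditional probabilities over $S$.

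The quantitative core is an elementary conditioning estimate: because $s \in I_{\Pi,\Pi'}(\epsilon)$ forces $P(\Pi_i(s) \setminus \Pi'_i(s) \mid \Pi_i(s)) \leq \epsilon$ and $P(\Pi'_i(s) \setminus \Pi_i(s) \mid \Pi'_i(s)) \leq \epsilon$, one has $|P(D \mid \Pi_i(s)) - P(D \mid \Pi'_i(s))| \leq \epsilon$ for \emph{every fixed} $S$-event $D$. This is proved by writing each conditional probability over the partition of $\Pi_i(s) \cup \Pi'_i(s)$ into the common cell $\Pi_i(s) \cap \Pi'_i(s)$ and the two boundary pieces, and checking that the extremal configuration is attained on a boundary piece whose conditional mass is at most $\epsilon$.

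The main obstacle is that the two conditional probabilities in the displayed inequality are \emph{not} taken of the same $S$-event: under $\mu$ the set is $\tau^{-1}(F)$ and under $\mu'$ it is $(\tau')^{-1}(F)$, and these agree only where $\tau = \tau'$. Here CSC is indispensable: it guarantees $\tau = \tau'$ on $I_{\Pi,\Pi'}(\gamma)$, so $\tau^{-1}(F)$ and $(\tau')^{-1}(F)$ coincide on that set, and on the dominant common cell $\Pi_i(s) \cap \Pi'_i(s)$ (which carries conditional mass at least $1-\epsilon$ under both priors) the integrands line up. The delicate point is to show that the residual disagreement off $I_{\Pi,\Pi'}(\gamma)$, together with the change of conditioning cell from $\Pi_i(s)$ to $\Pi'_i(s)$, does not push the total past $\epsilon$; I would treat these two effects jointly in a single boundary estimate rather than bounding them separately, since a naive split only yields $2\epsilon$. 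Verifying that CSC is strong enough to confine the residual disagreement within the $\epsilon$ budget is the crux of the argument; the remaining manipulations are routine bookkeeping.
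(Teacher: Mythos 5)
Your handling of condition 1, your reduction of condition 2 to comparing $P(E\cap\tau^{-1}(F)\mid\Pi_i(s))$ with $P(E\cap(\tau')^{-1}(F)\mid\Pi'_i(s))$, and your ``quantitative core'' (the bound $|P(D\mid\Pi_i(s))-P(D\mid\Pi'_i(s))|\le\epsilon$ for a \emph{fixed} $S$-event $D$, which is exactly the chain of inequalities in Appendix \ref{proof_conditional}) are all correct. But the step you defer as ``the crux'' is not bookkeeping, and it cannot be closed under the stated hypotheses. The common support condition pins down $\tau=\tau'$ only on $I_{\Pi,\Pi'}(\gamma)$, while $s\in I_{\Pi,\Pi'}(\epsilon)$ constrains only the cells containing $s$ itself; nothing bounds the mass that $\Pi_i(s)$ places on states where the \emph{opponents'} labels disagree. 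Concretely, take $N=2$, let player 1 be uninformed ($\Pi_1=\Pi_1'=\{S\}$, so $\tau_1\equiv\tau_1'\equiv t_1$), and let $\Pi_2'$ coincide with $\Pi_2$ except that one $\Pi_2$-cell $\pi$ of probability $\delta\in(\epsilon,1/2)$ is split into two $\Pi_2'$-cells of equal conditional mass. Every state $s\notin\pi$ lies in $I_{\Pi,\Pi'}(\epsilon)$, yet $\pi\cap I_{\Pi,\Pi'}(\gamma)=\emptyset$, so under the construction of $f_{(\Pi,\gamma)}$ the two split cells receive fresh labels in $Y_2$ and $t_2:=\alpha_2(\pi)$ is not in the range of $\tau_2'$. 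Taking $F=T_1\times\{t_2\}$ gives $\mu(S\times F\mid t_1)=P(\pi)=\delta$ but $\mu'(S\times F\mid t_1)=0$, so $(s,\tau'(s))\notin A_{\mu,\mu'}(\epsilon)$. The residual disagreement is of order $P(\{\tau\neq\tau'\}\mid\Pi_i(s))$, which is simply not within the $\epsilon$ budget.

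For comparison, the paper's own proof passes over exactly this point: it sets $G=E\cap(\tau')^{-1}(F)$ and claims $|\mu(E\times F\mid t_i)-\mu'(E\times F\mid t_i)|>\epsilon$ forces $|P(G\mid\Pi_i(s))-P(G\mid\Pi_i'(s))|>\epsilon$, which tacitly identifies $\mu(E\times F\mid t_i)$ with $P(E\cap(\tau')^{-1}(F)\mid\Pi_i(s))$ even though $\mu$ is $\tau$-consistent and hence $\mu(E\times F\mid t_i)=P(E\cap\tau^{-1}(F)\mid\Pi_i(s))$; the discrepancy between the two is precisely the term you flagged, and in the example above it equals $\delta$. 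So you have correctly isolated the one non-routine step, but you should recognize it as a genuine obstruction rather than a delicate estimate to be verified: to restore the conclusion one needs either to restrict the rectangles in the definition of $A_{\mu,\mu'}(\epsilon)$ to the form $E\times T$, or to add a hypothesis controlling $P(S\setminus I_{\Pi,\Pi'}(\gamma)\mid\Pi_i(s))$ and $P(S\setminus I_{\Pi,\Pi'}(\gamma)\mid\Pi_i'(s))$ (as is available at the states actually used in Lemma \ref{partitiontype}), at the cost of a larger constant in the conclusion.
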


\begin{proof}
See Appendix \ref{proof_conditional}.
\end{proof} \medskip

\noindent The next Lemma concerns higher-order beliefs. Specifically, if the event $E \subseteq S$ is common $p$-belief under the partition profile $\Pi$, then $(\iota \times \tau')(E)$ is common $p$-belief under the common prior obtained from any $\Pi'$-consistent type function $\tau'$.

\begin{lemma}\label{consist}
For any event $E \subseteq S$, $\Pi'$-labeling $\tau'$, and $\tau'$-consistent common prior $\mu'$, if $s \in  C^p_{\Pi'}(E)$, then $(s, \tau'(s)) \in C^p_{\mu'}((\iota \times \tau')(E))$.
\end{lemma}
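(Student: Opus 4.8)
The plan is to prove Lemma~\ref{consist} by induction on the level of mutual belief, showing that the set operators $B^p_{\Pi'}(\cdot)$ on states and $B^p_{\mu'}(\cdot)$ on KM states are compatible under the map $s \mapsto (s,\tau'(s))$, and then passing to the limit to conclude the claim about common $p$-belief. The key observation linking the two worlds is Lemma~\ref{prob_eq}, which tells us that for any event $E \subseteq S$ and any state $s$, the partition belief $P(E \mid \Pi'_i(s))$ equals the KM belief $\mu'((\iota \times \tau')(E) \mid \tau'_i(s))$. This is exactly the bridge needed to transport the $p$-belief inequalities across the two models.

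First I would establish the inductive base case, namely that $s \in B^p_{\Pi'}(E)$ implies $(s,\tau'(s)) \in B^p_{\mu'}((\iota \times \tau')(E))$. By definition, $s \in B^p_{\Pi'}(E)$ means $P(E \mid \Pi'_i(s)) \geq p$ for every player $i$. Applying Lemma~\ref{prob_eq} converts each such inequality into $\mu'((\iota \times \tau')(E) \mid \tau'_i(s)) \geq p$, which is precisely the condition $(s,\tau'(s)) \in B^p_{\mu'_i}((\iota \times \tau')(E))$ for each $i$, hence $(s,\tau'(s)) \in B^p_{\mu'}((\iota \times \tau')(E))$. For the inductive step, I would assume the claim holds for the $m$-th iterate, i.e. $s \in (B^p_{\Pi'})^m(E)$ implies $(s,\tau'(s)) \in (B^p_{\mu'})^m((\iota \times \tau')(E))$, and then apply the base-case argument with the event $(B^p_{\Pi'})^m(E)$ in place of $E$. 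The main subtlety to handle carefully here is the interaction between the image map $(\iota \times \tau')$ and the belief operators: I need to verify that $(\iota \times \tau')\big((B^p_{\Pi'})^m(E)\big)$ relates correctly to $(B^p_{\mu'})^m\big((\iota \times \tau')(E)\big)$, which requires knowing that the KM-state $p$-belief of $(\iota \times \tau')(E)$ depends only on the type coordinate $\tau'(s)$, so that the belief set in the KM model is a union of ``type-cylinders'' and the preimage structure under $\tau'$ is preserved.

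The expected main obstacle is precisely this compatibility of the image operation with the iterated belief operators. Because $\tau'$ is constant on each partition element, the function $s \mapsto (s,\tau'(s))$ is injective, but its image is not all of $S \times T$; the KM belief operator $B^p_{\mu'_i}$ acts on arbitrary subsets of $S \times T$, whereas the relevant states are confined to the graph of $\tau'$. The care needed is to confirm that membership in $(B^p_{\mu'})^m((\iota \times \tau')(E))$ at a point $(s,\tau'(s))$ is governed entirely by the type profile $\tau'(s)$ (since conditional beliefs given $t_i$ do not vary within a type), so that the inductive hypothesis can be fed back into Lemma~\ref{prob_eq} at the next level. Once the inductive statement is in hand for every $m$, the conclusion follows immediately by intersecting over $m$: if $s \in C^p_{\Pi'}(E) = \cap_{m \geq 1}(B^p_{\Pi'})^m(E)$, then $(s,\tau'(s)) \in (B^p_{\mu'})^m((\iota \times \tau')(E))$ for every $m$, and hence $(s,\tau'(s)) \in \cap_{m \geq 1}(B^p_{\mu'})^m((\iota \times \tau')(E)) = C^p_{\mu'}((\iota \times \tau')(E))$, as desired.
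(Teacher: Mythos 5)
Your proposal is correct and follows essentially the same route as the paper: induction on the belief level, using Lemma~\ref{prob_eq} as the bridge in the base case, and in the inductive step combining the inclusion $(\iota \times \tau')\bigl((B^p_{\Pi'})^m(E)\bigr) \subseteq (B^p_{\mu'})^m\bigl((\iota \times \tau')(E)\bigr)$ supplied by the induction hypothesis with monotonicity of conditional probability. The only difference is cosmetic: the paper resolves the "compatibility" subtlety you flag purely via that inclusion and monotonicity, without needing your observation that KM belief sets are unions of type-cylinders.
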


\begin{proof}
See Appendix \ref{proof_consist}. \end{proof}\medskip

Using the preceding lemmas, we finally show that if two partition profiles are close in the MS topology and are mapped to common priors consistent with a pair of type functions satisfying the common support condition, then the common priors are close in the KM topology.

\begin{lemma}\label{partitiontype}
Suppose $\mu =f_{(\Pi,\gamma)}(\Pi)$ is $\tau$-consistent with the $\Pi$-labeling $\tau$ and $\mu'=f_{(\Pi,\gamma)}(\Pi')$ is $\tau'$-consistent with the $\Pi'$-labeling $\tau'$.
For any $\epsilon \in (0,\gamma]$, if $d^{MS}(\Pi,\Pi') \leq \epsilon$, then $d^{KM}(\mu,\mu') \leq \epsilon$.
\end{lemma}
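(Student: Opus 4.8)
The plan is to bound each of the three terms in $d^{KM}(\mu,\mu')=\max\{\rho^{KM}_1(\mu,\mu'),\rho^{KM}_1(\mu',\mu),\rho^{KM}_0(\mu,\mu')\}$ by $\epsilon$ separately, using that $d^{MS}(\Pi,\Pi')\le\epsilon$ gives both $d^{MS}_1(\Pi,\Pi')\le\epsilon$ and $d^{MS}_1(\Pi',\Pi)\le\epsilon$. Two structural facts drive the argument. First, $\mu$ and $\mu'$ are the pushforwards of $P$ under $\iota\times\tau$ and $\iota\times\tau'$, so probabilities transfer through $\mu'(G)=P((\iota\times\tau')^{-1}(G))$ and its $\mu$-analogue. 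Second, the defining condition of the infimum $d^{MS}_1(\Pi,\Pi')$ is monotone in its threshold — enlarging $\delta$ both relaxes the belief level $1-\delta$ and enlarges $I_{\Pi,\Pi'}(\delta)$ and $C^{1-\delta}_{\Pi'}(\cdot)$ — so its defining set is an up-set. Since its infimum is at most $\epsilon$, every $\delta\in(\epsilon,\gamma]$ satisfies $P(C^{1-\delta}_{\Pi'}(I_{\Pi,\Pi'}(\delta))\cap I_{\Pi,\Pi'}(\delta))\ge 1-\delta$. Working at such $\delta$ rather than at $\epsilon$ itself keeps the parameter inside $(0,\gamma]$, where Lemmas \ref{conditional}--\ref{consist} and the common support condition (CSC) apply, and avoids any continuity-from-above argument at the endpoint; I only let $\delta\downarrow\epsilon$ at the very end.

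For $\rho^{KM}_0(\mu,\mu')$ I would fix a rectangle $E\times F$ and write $\mu(E\times F)-\mu'(E\times F)=P(E\cap\tau^{-1}(F))-P(E\cap\tau'^{-1}(F))$. By CSC, $\tau$ and $\tau'$ agree on $I_{\Pi,\Pi'}(\gamma)$, so the two probabilities coincide on that set and their difference is supported on $I_{\Pi,\Pi'}(\gamma)^c$; hence $|\mu(E\times F)-\mu'(E\times F)|\le P(I_{\Pi,\Pi'}(\gamma)^c)$ uniformly in $E\times F$. It remains to show $P(I_{\Pi,\Pi'}(\gamma))\ge 1-\epsilon$, which holds because, for every $\delta\in(\epsilon,\gamma]$, the displayed MS inequality yields $P(I_{\Pi,\Pi'}(\delta))\ge 1-\delta$ and $I_{\Pi,\Pi'}(\delta)\subseteq I_{\Pi,\Pi'}(\gamma)$; letting $\delta\downarrow\epsilon$ gives $P(I_{\Pi,\Pi'}(\gamma))\ge 1-\epsilon$.

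For $\rho^{KM}_1(\mu,\mu')$ I would fix $\delta\in(\epsilon,\gamma]$ and establish the inclusion
\[(\iota\times\tau')\bigl(C^{1-\delta}_{\Pi'}(I_{\Pi,\Pi'}(\delta))\cap I_{\Pi,\Pi'}(\delta)\bigr)\subseteq C^{1-\delta}_{\mu'}(A_{\mu,\mu'}(\delta))\cap A_{\mu,\mu'}(\delta).\]
For a state $s$ in the left-hand argument, Lemma \ref{conditional} places $(s,\tau'(s))$ in $A_{\mu,\mu'}(\delta)$; Lemma \ref{consist} (with $E=I_{\Pi,\Pi'}(\delta)$ and $p=1-\delta$) places it in $C^{1-\delta}_{\mu'}((\iota\times\tau')(I_{\Pi,\Pi'}(\delta)))$; and since $(\iota\times\tau')(I_{\Pi,\Pi'}(\delta))\subseteq A_{\mu,\mu'}(\delta)$ by Lemma \ref{conditional} applied pointwise, monotonicity of $C^{1-\delta}_{\mu'}(\cdot)$ upgrades this to $C^{1-\delta}_{\mu'}(A_{\mu,\mu'}(\delta))$. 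Pushing the inclusion forward through $\mu'=P\circ(\iota\times\tau')^{-1}$ and using the MS inequality at $\delta$ gives $\mu'(C^{1-\delta}_{\mu'}(A_{\mu,\mu'}(\delta))\cap A_{\mu,\mu'}(\delta))\ge 1-\delta$, so $\rho^{KM}_1(\mu,\mu')\le\delta$; taking $\delta\downarrow\epsilon$ gives $\rho^{KM}_1(\mu,\mu')\le\epsilon$. The bound on $\rho^{KM}_1(\mu',\mu)$ is entirely symmetric: one uses $d^{MS}_1(\Pi',\Pi)\le\epsilon$, the $\tau$/$\mu$ analogues of Lemmas \ref{conditional}--\ref{consist}, and the fact that on $I_{\Pi,\Pi'}(\delta)\subseteq I_{\Pi,\Pi'}(\gamma)$ the CSC gives $\tau(s)=\tau'(s)$, so the same KM states $(s,\tau(s))=(s,\tau'(s))$ witness membership in $A_{\mu',\mu}(\delta)=A_{\mu,\mu'}(\delta)$.

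I expect the main obstacle to be the set inclusion in the previous paragraph: correctly chaining Lemma \ref{conditional} and Lemma \ref{consist} and, in particular, justifying the monotonicity step that passes from $C^{1-\delta}_{\mu'}((\iota\times\tau')(I_{\Pi,\Pi'}(\delta)))$ to $C^{1-\delta}_{\mu'}(A_{\mu,\mu'}(\delta))$. This rests on the routine but necessary observation that $B^p_{\mu_i}(\cdot)$, hence $B^p_\mu(\cdot)$ and $C^p_\mu(\cdot)$, is monotone in its argument, which I would record explicitly. The remaining care is bookkeeping: holding the threshold $\delta$ strictly above $\epsilon$ so that every invocation of the lemmas and of the CSC is legitimate, and only collapsing $\delta\downarrow\epsilon$ after all three component bounds have been obtained.
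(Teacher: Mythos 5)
Your proposal is correct and follows essentially the same route as the paper's proof: the same three-way decomposition of $d^{KM}$, the same inclusion $(\iota\times\tau')(C^{1-\epsilon}_{\Pi'}(I_{\Pi,\Pi'}(\epsilon))\cap I_{\Pi,\Pi'}(\epsilon))\subseteq C^{1-\epsilon}_{\mu'}(A_{\mu,\mu'}(\epsilon))\cap A_{\mu,\mu'}(\epsilon)$ obtained by chaining Lemma \ref{conditional} and Lemma \ref{consist} with monotonicity of the common-belief operator, and the same CSC-based splitting of $P(E\cap\tau^{-1}(F))-P(E\cap(\tau')^{-1}(F))$ for the $\rho^{KM}_0$ bound. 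The only difference is bookkeeping: you work at thresholds $\delta>\epsilon$ and let $\delta\downarrow\epsilon$, which handles the infimum in $d^{MS}_1$ slightly more carefully than the paper's direct assertion of the defining inequality at $\epsilon$ itself, but it is not a different argument.
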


\begin{proof}
See Appendix \ref{proof_partitiontype}. \end{proof}\medskip

\noindent Theorem \ref{thm_1} part 1(a) follows directly from Lemma \ref{partitiontype}.

\subsection{Theorem \ref{thm_1}: parts 1(b) and 2}\label{pf_part1b2}

We now show that if $f_{(\Pi,\gamma)}(\Pi)$ and $f_{(\Pi,\gamma)}(\Pi')$ are close in the KM topology, then $\Pi$ and $\Pi'$ are close in the MS topology. Formally, for $\epsilon \in (0,\gamma]$, $d^{KM}(f_{(\Pi,\gamma)}(\Pi) , f_{(\Pi,\gamma)}(\Pi')) \leq  \epsilon$ implies $d^{MS}(\Pi, \Pi') \leq (N+1)\epsilon$. The proof mirrors the proof of part 1(a). 

A preliminary Lemma establishes the sense in which first-order beliefs at a partition element cohere with those at the type to which the partition element is mapped.

\begin{lemma}\label{prob_eq2}
For any event $E \subseteq S \times T$, $\Pi'$-labeling $\tau'$, and $\tau'$-consistent common prior $\mu'$, if $s \in S$, then
\begin{equation} \mu'(E | \tau'_i(s))=P((\iota \times \tau')^{-1}(E)| \Pi'_i(s)). \notag \end{equation}
\end{lemma}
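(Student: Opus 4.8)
The plan is to mirror the proof of Lemma \ref{prob_eq}, exploiting that $\tau'$-consistency makes $\mu'$ the pushforward of $P$ under the map $\iota \times \tau'$, and that the $\Pi'$-labeling property identifies the preimage of a type's cylinder set with a partition element. Fix $s \in S$ and write $t_i := \tau'_i(s)$. Unwinding the definition of conditional belief in the KM framework gives
$$\mu'(E \mid t_i) = \frac{\mu'\bigl(E \cap (S \times \{t_i\} \times T_{-i})\bigr)}{\mu'(t_i)},$$
so the first step is to verify that the denominator is strictly positive and then to compute both the numerator and the denominator by applying $\tau'$-consistency (Equation \eqref{pushforward}).

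The key observation is that the preimage of the conditioning cylinder is exactly the partition element:
$$(\iota \times \tau')^{-1}\bigl(S \times \{t_i\} \times T_{-i}\bigr) = \{s' \in S : \tau'_i(s') = \tau'_i(s)\} = \Pi'_i(s),$$
where the last equality uses that $\tau'$ is a $\Pi'$-labeling, so $\tau'_i(s') = \tau'_i(s)$ if and only if $\Pi'_i(s') = \Pi'_i(s)$. Since $\Pi'_i$ is a partition of $S$ into non-null elements, $P(\Pi'_i(s)) > 0$, and applying $\tau'$-consistency to this cylinder yields $\mu'(t_i) = P(\Pi'_i(s)) > 0$. Hence the conditional probability is well-defined and the degenerate case $\mu'(t_i)=0$ does not arise.

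For the numerator, I would combine the fact that preimages commute with intersection with the identity above:
$$(\iota \times \tau')^{-1}\bigl(E \cap (S \times \{t_i\} \times T_{-i})\bigr) = (\iota \times \tau')^{-1}(E) \cap \Pi'_i(s).$$
Applying $\tau'$-consistency to this event gives $\mu'\bigl(E \cap (S \times \{t_i\} \times T_{-i})\bigr) = P\bigl((\iota \times \tau')^{-1}(E) \cap \Pi'_i(s)\bigr)$. Dividing numerator by denominator then produces
$$\mu'(E \mid t_i) = \frac{P\bigl((\iota \times \tau')^{-1}(E) \cap \Pi'_i(s)\bigr)}{P(\Pi'_i(s))} = P\bigl((\iota \times \tau')^{-1}(E) \mid \Pi'_i(s)\bigr),$$
which is exactly the claimed identity. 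I do not anticipate any genuine obstacle: the argument is a routine computation once the cylinder-preimage identity is in hand, and the only point requiring care is the non-nullness of $\Pi'_i(s)$, which guarantees both that the two conditional probabilities are defined and that the pushforward relation may be divided through. The whole statement is the exact dual of Lemma \ref{prob_eq}, with pushforward images replaced by pullback preimages.
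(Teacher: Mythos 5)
Your proposal is correct and follows essentially the same route as the paper's proof: identify $(\iota \times \tau')^{-1}(S \times \{\tau'_i(s)\} \times T_{-i})$ with $\Pi'_i(s)$ via the $\Pi'$-labeling property, push both numerator and denominator through the $\tau'$-consistency relation, and divide. Your explicit check that $\mu'(\tau'_i(s)) = P(\Pi'_i(s)) > 0$ (so the conditional probabilities are well-defined) is a small point of care the paper leaves implicit, but it does not change the argument.
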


\begin{proof}
See Appendix \ref{proof_prob_eq2}.
\end{proof}\medskip

\noindent Moreover, the restrictions placed on conditional beliefs on types in the KM topology bound the conditional beliefs at the partition elements to which they correspond.

\begin{lemma}\label{conditional2}
Suppose $\mu$ is $\tau$-consistent with the $\Pi$-labeling $\tau$ and $\mu'$ is $\tau'$-consistent with the $\Pi'$-labeling $\tau'$. For any $\epsilon \in (0,\gamma]$, if $ s \in (\iota \times \tau')^{-1}(A_{\mu,\mu'}(\epsilon))$ and $\tau(s)=\tau'(s)$, then $s \in I_{\Pi,\Pi'}(\epsilon)$ and $P(\{s' \in S: \tau_i(s') \neq \tau'_i(s')\}|\Pi'_i(s)) \leq \epsilon$ for all $i$.
\end{lemma}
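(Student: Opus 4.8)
The plan is to transport the KM-side hypothesis back into the partition model via Lemma \ref{prob_eq2} and then extract both conclusions from a single sup-norm bound on state-conditional beliefs. First I would unpack the hypothesis: $s \in (\iota \times \tau')^{-1}(A_{\mu,\mu'}(\epsilon))$ means $(s,\tau'(s)) \in A_{\mu,\mu'}(\epsilon)$, so for every player $i$ and every rectangle $E \times F \subseteq S \times T$ we have $\mu(\tau'_i(s)), \mu'(\tau'_i(s)) > 0$ and $|\mu(E \times F \mid \tau'_i(s)) - \mu'(E \times F \mid \tau'_i(s))| \le \epsilon$. Because $\tau(s) = \tau'(s)$, I may rewrite $\tau'_i(s)$ as $\tau_i(s)$ inside the $\mu$-conditional. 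The key reduction is to specialize to rectangles $E \times T$ with $E \subseteq S$: applying Lemma \ref{prob_eq2} once to $(\Pi,\tau,\mu)$ and once to $(\Pi',\tau',\mu')$, and noting $(\iota \times \tau)^{-1}(E \times T) = E = (\iota \times \tau')^{-1}(E \times T)$, gives $\mu(E \times T \mid \tau_i(s)) = P(E \mid \Pi_i(s))$ and $\mu'(E \times T \mid \tau'_i(s)) = P(E \mid \Pi'_i(s))$. Hence
\[ |P(E \mid \Pi_i(s)) - P(E \mid \Pi'_i(s))| \le \epsilon \quad \text{for all } E \subseteq S \text{ and all } i \in \mathcal{N}. \]

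For the first conclusion, $s \in I_{\Pi,\Pi'}(\epsilon)$, I would test this bound at the two partition elements. Taking $E = \Pi'_i(s)$ and using $P(\Pi'_i(s) \mid \Pi'_i(s)) = 1$ gives $P(\Pi'_i(s) \mid \Pi_i(s)) \ge 1 - \epsilon$, i.e. $P(\Pi_i(s) \setminus \Pi'_i(s) \mid \Pi_i(s)) \le \epsilon$; symmetrically, taking $E = \Pi_i(s)$ gives $P(\Pi'_i(s) \setminus \Pi_i(s) \mid \Pi'_i(s)) \le \epsilon$. Since this holds for every $i$, the defining maximum is at most $\epsilon$ and $s \in I_{\Pi,\Pi'}(\epsilon)$.

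For the second conclusion I would reduce the disagreement probability to the symmetric-difference term already bounded above, via a set identity. Writing $D_i := \{s' \in S : \tau_i(s') \neq \tau'_i(s')\}$, I claim $D_i \cap \Pi'_i(s) = \Pi'_i(s) \setminus \Pi_i(s)$. Indeed, since $\tau'$ is a $\Pi'$-labeling, every $s' \in \Pi'_i(s)$ satisfies $\tau'_i(s') = \tau'_i(s)$; and since $\tau$ is a $\Pi$-labeling with $\tau_i(s) = \tau'_i(s)$, the level set $\{s' : \tau_i(s') = \tau'_i(s)\}$ is exactly $\Pi_i(s)$. Thus for $s' \in \Pi'_i(s)$ the labels disagree precisely when $s' \notin \Pi_i(s)$, proving the identity. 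Conditioning on $\Pi'_i(s)$ then yields $P(D_i \mid \Pi'_i(s)) = P(\Pi'_i(s) \setminus \Pi_i(s) \mid \Pi'_i(s)) \le \epsilon$ for every $i$, by the bound from the first step.

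The only genuinely delicate point is this set identity, and it is exactly where the hypothesis $\tau(s) = \tau'(s)$ is indispensable: the alignment of the two labelings at $s$ is what forces $\{s' : \tau_i(s') = \tau'_i(s)\}$ to equal $\Pi_i(s)$, collapsing the disagreement event inside $\Pi'_i(s)$ onto the symmetric-difference term. Without this alignment the two conclusions decouple and the argument breaks. Everything else is bookkeeping built on Lemma \ref{prob_eq2} and the $E \times T$ specialization, so I expect no further obstacles.
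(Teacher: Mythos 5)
Your proof is correct and follows essentially the same route as the paper's: both pull the $A_{\mu,\mu'}(\epsilon)$ condition back to partition-conditional probabilities via Lemma \ref{prob_eq2} applied to rectangles of the form $E \times T$, and both exploit $\tau(s)=\tau'(s)$ to compare beliefs at $\Pi_i(s)$ and $\Pi'_i(s)$ through the common type $t_i$. The only differences are cosmetic: you argue the first conclusion directly (testing $E=\Pi_i(s)$ and $E=\Pi'_i(s)$) where the paper argues by contrapositive with the event $\Pi_i(s) \backslash \Pi'_i(s)$, and you obtain the second conclusion from the set identity $D_i \cap \Pi'_i(s) = \Pi'_i(s) \backslash \Pi_i(s)$ together with the bound already established in the first step, whereas the paper invokes the belief-closeness condition a second time on the same event $E_i = D_i \cap \{s' : \tau'_i(s')=t_i\}$ --- these are the same set, so the two derivations coincide in substance.
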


\begin{proof}
See Appendix \ref{proof_conditional2}.
\end{proof}\medskip

\noindent The next Lemma concerns higher-order beliefs of events at which beliefs are similar.

\begin{lemma}\label{consist2}
Suppose $\mu$ is $\tau$-consistent with the $\Pi$-labeling $\tau$ and $\mu'$ is $\tau'$-consistent with the $\Pi'$-labeling $\tau'$. For any $\epsilon \in (0,\gamma]$, if $s \in (\iota \times \tau')^{-1}(C^{1-\epsilon}_{\mu'}(A_{\mu,\mu'}(\epsilon)) \cap A_{\mu,\mu'}(\epsilon))$  and $\tau(s)=\tau'(s)$, then $s \in C^{1-2\epsilon}_{\Pi'}(I_{\Pi,\Pi'}(\epsilon))$.
\end{lemma}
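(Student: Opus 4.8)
The plan is to transport the hypothesis, which is stated on $S\times T$ under $\mu'$, back onto $S$ under $\Pi'$, and only then to lower the belief threshold from $1-\epsilon$ to $1-2\epsilon$. Write $G:=(\iota\times\tau')^{-1}(A_{\mu,\mu'}(\epsilon))$, $F:=I_{\Pi,\Pi'}(\epsilon)$, $W:=\{s'\in S:\tau(s')=\tau'(s')\}$, and $W_i:=\{s'\in S:\tau_i(s')=\tau'_i(s')\}$, so $W=\bigcap_i W_i$. I would first establish a pullback identity for the belief operators. By Lemma \ref{prob_eq2}, $\mu'(X\mid\tau'_i(s'))=P((\iota\times\tau')^{-1}(X)\mid\Pi'_i(s'))$ for every $X\subseteq S\times T$ and every $s'$, and $\tau'$-consistency makes the marginal $\mu'(\tau'_i(s'))=P(\Pi'_i(s'))$ strictly positive; hence $(\iota\times\tau')^{-1}(B^{p}_{\mu'}(X))=B^{p}_{\Pi'}((\iota\times\tau')^{-1}(X))$. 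Iterating over $m$ and intersecting gives $(\iota\times\tau')^{-1}(C^{p}_{\mu'}(X))=C^{p}_{\Pi'}((\iota\times\tau')^{-1}(X))$, the preimage analogue of the commutation already exploited for images in Lemma \ref{consist}. Taking $p=1-\epsilon$ and $X=A_{\mu,\mu'}(\epsilon)$ converts the first hypothesis into the purely partitional statement $s\in C^{1-\epsilon}_{\Pi'}(G)\cap G$, while the second hypothesis is exactly $s\in W$. It therefore remains to prove that $s\in C^{1-\epsilon}_{\Pi'}(G)\cap G\cap W$ implies $s\in C^{1-2\epsilon}_{\Pi'}(F)$.

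Two ingredients feed the downgrade. Lemma \ref{conditional2} supplies, at every $s'\in G\cap W$, both the containment $G\cap W\subseteq F$ and the per-cell bound $P(W_i^{c}\mid\Pi'_i(s'))\le\epsilon$ for each $i$; moreover its proof is coordinatewise, so I would extract the refinement $G\cap W_i\subseteq F_i$, where $F=\bigcap_i F_i$ and $F_i$ is player $i$'s symmetric-difference condition. The common support condition, together with $\epsilon\le\gamma$ and the monotonicity $I_{\Pi,\Pi'}(\epsilon)\subseteq I_{\Pi,\Pi'}(\gamma)$, gives $F\subseteq W$. The mechanism is transparent when $N=1$, where common belief collapses to a single step: $B^{p}_{\Pi'_1}(F)$ is $\Pi'_1$-measurable and hence fixed by $B^{p}_{\Pi'_1}$, so $C^{1-2\epsilon}_{\Pi'}(F)=B^{1-2\epsilon}_{\Pi'}(F)$. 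Since $s\in C^{1-\epsilon}_{\Pi'}(G)\subseteq B^{1-\epsilon}_{\Pi'}(G)$ we have $P(G\mid\Pi'_1(s))\ge1-\epsilon$, and since $s\in G\cap W$ we have $P(W_1^{c}\mid\Pi'_1(s))\le\epsilon$; hence $P(F\mid\Pi'_1(s))\ge P(G\cap W\mid\Pi'_1(s))\ge(1-\epsilon)-\epsilon=1-2\epsilon$. This pins down the origin of the factor of two: one $\epsilon$ is the price of believing $G$, the other is the conditional mass of the label-mismatch set $W_1^{c}$.

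The hard part is the passage to $N\ge2$, and it is genuinely a multi-player obstacle. For player $i$'s operator $B^{1-2\epsilon}_{\Pi'_i}$ to confirm belief in the \emph{joint} event $F=\bigcap_j F_j$, one needs $P(F^{c}\mid\Pi'_i(s))\le2\epsilon$, and since $F_j^{c}\subseteq G^{c}\cup W_j^{c}$ this amounts to controlling $P(G^{c}\cup W^{c}\mid\Pi'_i(s))$ with $W^{c}=\bigcup_j W_j^{c}$ — whereas Lemma \ref{conditional2} bounds only player $i$'s \emph{own} coordinate $W_i^{c}$. A union bound over $j$ would leak a spurious factor of $N$, yet the stated threshold $1-2\epsilon$ is independent of $N$, so the other players' label mismatch must be absorbed into a single additional $\epsilon$ rather than summed. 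I expect to accomplish this through the evident-event characterization of common $p$-belief from \cite{MS1996}: it suffices to exhibit a $(1-2\epsilon)$-evident event $R^{\ast}$ with $s\in R^{\ast}\subseteq B^{1-2\epsilon}_{\Pi'}(F)$, built from $C^{1-\epsilon}_{\Pi'}(G)$ intersected with the $\Pi'_j$-measurable self-control sets $V_j:=\{s':P(W_j^{c}\mid\Pi'_j(s'))\le\epsilon\}$, on which each player's own label control becomes self-evident. The crux — and the step I would spend the most care on — is to show that the $(1-\epsilon)$-evidence of $C^{1-\epsilon}_{\Pi'}(G)$ propagates the statement ``every player sits in a self-controlled cell'' through the belief regress, so that at each node player $i$ believes the full agreement event $W$ (hence all $F_j$) at cost exactly one $\epsilon$ beyond her belief in $G$, without the argument reducing to a leaky sum over $j$ of the $W_j^{c}$ masses.
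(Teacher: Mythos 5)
Your reduction is sound as far as it goes: the pullback identity $(\iota\times\tau')^{-1}(B^{p}_{\mu'}(X))=B^{p}_{\Pi'}((\iota\times\tau')^{-1}(X))$ does follow from Lemma \ref{prob_eq2}, the per-coordinate refinement of Lemma \ref{conditional2} (that $s'\in G$ together with $\tau_i(s')=\tau'_i(s')$ already yields player $i$'s symmetric-difference bound at $s'$) is correct, and the $N=1$ computation is right. But for $N\ge 2$ you stop exactly where the lemma begins. Everything after ``I expect to accomplish this'' is a plan, not an argument: you correctly diagnose that player $i$ must assign conditional probability at least $1-2\epsilon$ to the joint event $\bigcap_j F_j$ while Lemma \ref{conditional2} controls only $P(W_i^{c}\mid\Pi'_i(s))$, player $i$'s own mismatch set, and that a union bound over $j$ yields $1-(N+1)\epsilon$ rather than $1-2\epsilon$; yet you never construct the $(1-2\epsilon)$-evident event $R^{\ast}$ or show how it absorbs the other players' mismatch sets at the cost of a single $\epsilon$. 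Since that absorption is the entire content of the lemma when $N\ge 2$, the proof is incomplete. For comparison, the paper argues by direct induction on the level $m$ of mutual belief: at each level it pulls $(B^{1-\epsilon}_{\mu'})^{m}(A_{\mu,\mu'}(\epsilon))$ back through Lemma \ref{prob_eq2} to get conditional mass at least $1-\epsilon$ on $\Pi'_i(s)$ and then subtracts only $P(\{s':\tau_i(s')\ne\tau'_i(s')\}\mid\Pi'_i(s))\le\epsilon$ — precisely the inclusion step you flag as delicate. Whatever one thinks of that step, your writeup does not supply a substitute for it.

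A second, independent problem: you invoke the common support condition to obtain $F\subseteq W$. Unlike Lemmas \ref{conditional} and \ref{partitiontype}, this lemma does not assume $\mu=f_{(\Pi,\gamma)}(\Pi)$ and $\mu'=f_{(\Pi,\gamma)}(\Pi')$; it is stated for arbitrary $\tau$- and $\tau'$-consistent priors, and must hold in that generality because it feeds into Lemma \ref{typespartitions}, which the paper explicitly uses to prove part 2 of Theorem \ref{thm_1} without appealing to the common support condition. Without CSC the inclusion $I_{\Pi,\Pi'}(\epsilon)\subseteq W$ can fail (take $\Pi=\Pi'$ with two labelings that disagree except at $s$), so an evident-event construction that leans on $F\subseteq W$ will not deliver the lemma as stated. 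The only agreement information available is the pointwise hypothesis $\tau(s)=\tau'(s)$ and the bounds $P(W_j^{c}\mid\Pi'_j(s'))\le\epsilon$ at points $s'\in G$ — which is exactly why the multi-player step you left open is the crux.
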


\begin{proof}
See Appendix \ref{proof_consist2}.
\end{proof} \medskip

 Using the preceding lemmas, we show that if two common priors are close in the KM topology, then the associated partition profiles are close in the MS topology.

\begin{lemma}\label{typespartitions}
Suppose $\mu$ is $\tau$-consistent with the $\Pi$-labeling $\tau$ and $\mu'$ is $\tau'$-consistent with the $\Pi'$-labeling $\tau'$.
For any $\epsilon \in (0,\gamma]$, if $d^{KM}(\mu,\mu') \leq \epsilon$, then $d^{MS}(\Pi,\Pi') \leq (N+1) \epsilon$.
\end{lemma}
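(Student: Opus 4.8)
The plan is to prove the two one-sided bounds $d^{MS}_1(\Pi,\Pi') \le (N+1)\epsilon$ and $d^{MS}_1(\Pi',\Pi) \le (N+1)\epsilon$ separately; since $d^{MS}(\Pi,\Pi') = \max\{d^{MS}_1(\Pi,\Pi'), d^{MS}_1(\Pi',\Pi)\}$, the lemma follows. I carry out the first bound in detail and obtain the second by exchanging the roles of $(\Pi,\tau,\mu)$ and $(\Pi',\tau',\mu')$, using $\rho^{KM}_1(\mu',\mu)\le\epsilon$ together with the symmetries $A_{\mu,\mu'}(\epsilon)=A_{\mu',\mu}(\epsilon)$ and $I_{\Pi,\Pi'}(\epsilon)=I_{\Pi',\Pi}(\epsilon)$. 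By the definition of $d^{MS}_1$, to establish $d^{MS}_1(\Pi,\Pi')\le(N+1)\epsilon$ it suffices to exhibit a $P$-probability-$(1-(N+1)\epsilon)$ set of states contained in $C^{1-(N+1)\epsilon}_{\Pi'}(I_{\Pi,\Pi'}((N+1)\epsilon))\cap I_{\Pi,\Pi'}((N+1)\epsilon)$.

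First I would transport the KM-closeness hypothesis to the state space via $\tau'$-consistency. From $d^{KM}(\mu,\mu')\le\epsilon$ we have $\rho^{KM}_1(\mu,\mu')\le\epsilon$, so writing $D:=C^{1-\epsilon}_{\mu'}(A_{\mu,\mu'}(\epsilon))\cap A_{\mu,\mu'}(\epsilon)$ and using continuity from above of $\mu'$ as $\delta\downarrow\epsilon$ (with $A_{\mu,\mu'}(\delta)\downarrow A_{\mu,\mu'}(\epsilon)$), the infimum defining $\rho^{KM}_1$ is feasible, giving $\mu'(D)\ge 1-\epsilon$. Because $\mu'$ is $\tau'$-consistent, $P(G)=\mu'(D)\ge 1-\epsilon$ for the pullback $G:=(\iota\times\tau')^{-1}(D)$.

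The heart of the proof is the measure estimate $P(G\cap H)\ge 1-(N+1)\epsilon$, where $H:=\{s:\tau(s)=\tau'(s)\}=\bigcap_i H_i$, $H_i:=\{s:\tau_i(s)=\tau'_i(s)\}$, and $W_i:=S\setminus H_i$. The factor $N+1$ originates exactly here: the $1$ is the ex-ante loss $P(S\setminus G)\le\epsilon$, and the $N$ is a union bound over the players' type-disagreement events. Concretely, for every $s_0\in G$ and every $i$ we have $(s_0,\tau'(s_0))\in A_{\mu,\mu'}(\epsilon)$, hence $\mu(\tau'_i(s_0))>0$, so there is a unique $\Pi_i$-cell $\pi$ with $\tau_i(\pi)=\tau'_i(s_0)$; testing the belief-closeness in $A_{\mu,\mu'}(\epsilon)$ against $E\times F=\pi\times T$ and invoking Lemma~\ref{prob_eq2} yields $P(\pi\mid\Pi'_i(s_0))\ge 1-\epsilon$, and since $\tau_i$ agrees with the constant $\tau'_i(s_0)$ precisely on $\Pi'_i(s_0)\cap\pi$, this gives $P(W_i\mid\Pi'_i(s_0))\le\epsilon$. (This is the conditional-disagreement content of Lemma~\ref{conditional2}, which requires only membership in $A_{\mu,\mu'}(\epsilon)$, not the agreement $\tau(s_0)=\tau'(s_0)$.) Covering $G$ by the $\Pi'_i$-cells it meets and summing this per-cell bound gives $P(G\cap W_i)\le\epsilon$ for each $i$, so $P(G\setminus H)\le\sum_i P(G\cap W_i)\le N\epsilon$, whence $P(G\cap H)\ge(1-\epsilon)-N\epsilon=1-(N+1)\epsilon$.

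Finally I would show $G\cap H$ lies in the target event. For $s\in G\cap H$ we have $\tau(s)=\tau'(s)$ along with $s\in(\iota\times\tau')^{-1}(A_{\mu,\mu'}(\epsilon))$ and $s\in(\iota\times\tau')^{-1}(C^{1-\epsilon}_{\mu'}(A_{\mu,\mu'}(\epsilon))\cap A_{\mu,\mu'}(\epsilon))$, so Lemma~\ref{conditional2} yields $s\in I_{\Pi,\Pi'}(\epsilon)$ and Lemma~\ref{consist2} yields $s\in C^{1-2\epsilon}_{\Pi'}(I_{\Pi,\Pi'}(\epsilon))$; thus $G\cap H\subseteq C^{1-2\epsilon}_{\Pi'}(I_{\Pi,\Pi'}(\epsilon))\cap I_{\Pi,\Pi'}(\epsilon)$. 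Since $\epsilon\le(N+1)\epsilon$ and $2\epsilon\le(N+1)\epsilon$ (as $N\ge 2$), monotonicity of $I_{\Pi,\Pi'}(\cdot)$ in its threshold and of common $p$-belief (increasing in the event, decreasing in $p$) upgrades the inclusion to $G\cap H\subseteq C^{1-(N+1)\epsilon}_{\Pi'}(I_{\Pi,\Pi'}((N+1)\epsilon))\cap I_{\Pi,\Pi'}((N+1)\epsilon)$. Combined with $P(G\cap H)\ge 1-(N+1)\epsilon$, this certifies that $(N+1)\epsilon$ is feasible in the infimum defining $d^{MS}_1(\Pi,\Pi')$, giving $d^{MS}_1(\Pi,\Pi')\le(N+1)\epsilon$, and the symmetric argument gives $d^{MS}_1(\Pi',\Pi)\le(N+1)\epsilon$. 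I expect the main obstacle to be the measure estimate of the third paragraph, specifically justifying that the per-player conditional bound $P(W_i\mid\Pi'_i(s_0))\le\epsilon$ is available at \emph{every} $s_0\in G$ (including where $\tau(s_0)\ne\tau'(s_0)$), so that it may be summed over all cells meeting $G$; this is why the conditional disagreement bound must be separated cleanly from the $I_{\Pi,\Pi'}(\epsilon)$- and common-belief memberships, which additionally require $\tau(s)=\tau'(s)$.
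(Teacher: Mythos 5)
Your proposal is correct and follows essentially the same route as the paper's proof: you pull back the common-$(1-\epsilon)$-belief event to the state space ($G$ is the paper's $A$), bound the type-disagreement set on it by $N\epsilon$ via a per-player, per-cell conditional estimate and a union bound (the paper's Claim 2), apply Lemmas \ref{conditional2} and \ref{consist2} on the agreement set (the paper's Claim 1), and finish with the monotonicity upgrade to $(N+1)\epsilon$. The only cosmetic difference is that you certify the per-cell bound by testing closeness against $\pi\times T$ with $\mu(\pi\times T\mid t_i)=1$, whereas the paper tests against $(D_i\cap A\cap\{\tau_i'=t_i\})\times T$ with $\mu$-conditional zero; these are equivalent.
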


\begin{proof}
See Appendix \ref{proof_typespartitions}.
\end{proof}\medskip

We now argue that Lemma \ref{typespartitions} immediately establishes Theorem \ref{thm_1} parts 1(b) and 2. Fix $\epsilon \in (0,\gamma]$. Observe that $f_{(\Pi,\gamma)}(\Pi')$ is $\Pi'$-consistent for all $\Pi' \in \mathcal{P}^N$. Hence, for any $\Pi, \Pi' \in \mathcal{P}^N$, there exists a $\Pi$-labeling $\tau$ and a $\Pi'$-labeling $\tau'$ such that $f_{(\Pi,\gamma)}(\Pi)$ is $\tau$-consistent and $f_{(\Pi,\gamma)}(\Pi')$ is $\tau'$-consistent. Then, from Lemma \ref{typespartitions}, $d^{KM}(f_{(\Pi,\gamma)}(\Pi), f_{(\Pi,\gamma)}(\Pi')) \leq \epsilon$ implies $d^{MS}(\Pi,\Pi') \leq (N+1) \epsilon$. Lemma \ref{typespartitions} also establishes Theorem \ref{thm_1} part 2 because its proof does not rely on any other properties of $f_{(\Pi,\gamma)}$, e.g., the common support condition. Specifically, if $\mu$ is $\Pi$-consistent and $\mu'$ is $\Pi'$-consistent, then there exists a $\Pi$-labeling $\tau$ and a $\Pi'$-labeling $\tau'$ such that $\mu$ is $\tau$-consistent and $\mu$ is $\tau'$-consistent.  Then, from Lemma \ref{typespartitions}, $d^{KM}(\mu,\mu') \leq \epsilon$ implies $d^{MS}(\Pi,\Pi') \leq (N+1) \epsilon$. 

\begin{appendix}
	
\section{Proofs of Lemmas}\label{proofs}

\subsection{Lemma \ref{prob_eq}}\label{proof_prob_eq}

Observe that $\Pi'_i(s)= (\tau')^{-1}( \{\tau'_i(s)\} \times T_{-i})$ because $\tau'$ is a $\Pi'$-labeling. Hence, for an arbitrary event $E \subseteq S$,
\[P(E \cap \Pi'_i(s))= \mu'(E \times \{\tau'_i(s)\} \times T_{-i})\] because $\mu'$ is $\tau'$-consistent and $\tau'$ is a $\Pi'$-labeling. Moreover,
\[\mu'(E \times \{\tau'_i(s)\} \times T_{-i})= \mu'(E \times \{\tau'_i(s)\} \times \tau'_{-i}(E))\]
because
\[ \mu'(E \times \{\tau'_i(s)\} \times (T_{-i} \backslash \tau'_{-i}(E)))= P(E \cap (\tau')^{-1}(\{\tau'_i(s)\} \times T_{-i} \backslash \tau'_{-i}(E)))=P(E \cap \emptyset)=0. \]
It follows that
\[ P(E| \Pi'_i(s)) = \frac{P(E \cap \Pi'_i(s))}{P(S \cap \Pi'_i(s))}=\frac{\mu'(E \times \{\tau'_i(s)\} \times \tau'_{-i}(E)) }{\mu'(S \times \{\tau'_i(s)\} \times T_{-i})}=\mu'((\iota \times \tau')(E)| \tau'_i(s)).\]

\subsection{Lemma \ref{conditional}}\label{proof_conditional}

If $s \in I_{\Pi,\Pi'}(\epsilon)$, then $\tau_i(s)=\tau'_i(s)=t_i$ for all $i$ by $\epsilon \in (0,\gamma]$ and the common support condition. Suppose, towards contradiction, that $|\mu(E \times F|t_i)-\mu'(E \times F| t_i)|> \epsilon$ for some event $E \times F \subseteq S \times T$ and some player $i$. Then, for $G= E \cap (\tau')^{-1}(F)$, it must be that
	\[|P(G|\Pi_i(s))-P(G|\Pi'_i(s))|> \epsilon\]
	because $\tau$ is a $\Pi$-labeling and $\tau'$ is a $\Pi'$-labeling. We show that this cannot occur and hence $(s,\tau'(s)) \in A_{\mu,\mu'}(\epsilon)$. 
 
 Towards contradiction, observe that 
\begin{equation}
		\begin{aligned}
			P(G| \Pi_i(s))- P(G| \Pi'_i(s)) =  
			  & \quad P( (\Pi_i(s) \backslash \Pi'_i(s)) \cap G| \Pi_i(s)) -P ((\Pi'_i(s) \backslash \Pi_i(s)) \cap G| \Pi'_i(s))+\\
			& \quad  P( (\Pi'_i(s) \cap \Pi_i(s)) \cap G| \Pi_i(s))-P((\Pi'_i(s) \cap \Pi_i(s)) \cap G| \Pi'_i(s))\\
   \leq & \quad  P( \Pi_i(s) \backslash \Pi'_i(s) \cap G| \Pi_i(s)) +\\
			& \quad P( (\Pi'_i(s) \cap \Pi_i(s)) \cap G| \Pi_i(s))-P((\Pi'_i(s) \cap \Pi_i(s)) \cap G| \Pi'_i(s)),
		\end{aligned}
		\notag
	\end{equation}
 where the inequality follows from $P (\Pi'_i(s) \backslash \Pi_i(s) \cap G | \Pi'_i(s)) \geq 0$. Moreover,
 \begin{equation}
     \begin{aligned}
P( (\Pi'_i(s) \cap \Pi_i(s)) \cap G| \Pi_i(s))-P((\Pi'_i(s) \cap \Pi_i(s)) \cap G | \Pi'_i(s))=\\
\frac{(P(\Pi'_i(s))-P(\Pi_i(s)))P(\Pi'_i(s) \cap \Pi_i(s) \cap G)}{P(\Pi_i(s))P(\Pi'_i(s))} \leq \\ \max\{\frac{(P(\Pi'_i(s))-P(\Pi_i(s)))P(\Pi'_i(s) \cap \Pi_i(s) )}{P(\Pi_i(s))P(\Pi'_i(s))},0\}=\\
\max\{P( \Pi'_i(s) \cap \Pi_i(s)| \Pi_i(s))-P(\Pi'_i(s) \cap \Pi_i(s) | \Pi'_i(s)),0\}.
     \end{aligned} \notag
 \end{equation} Hence,
\begin{equation}
		\begin{aligned}
			P(G| \Pi_i(s))- P(G| \Pi'_i(s)) \leq\\  
			  P( \Pi_i(s) \backslash \Pi'_i(s)| \Pi_i(s))+ \max\{P( \Pi'_i(s) \cap \Pi_i(s)| \Pi_i(s))-P(\Pi'_i(s) \cap \Pi_i(s)| \Pi'_i(s)),0\} = \\
    \max\{1- P(\Pi'_i(s) \cap \Pi_i(s)| \Pi'_i(s)), P( \Pi_i(s) \backslash \Pi'_i(s)| \Pi_i(s))\}.
		\end{aligned}
		\notag
	\end{equation}
From $s \in I_{\Pi,\Pi'}(\epsilon)$, we have  $P( \Pi_i(s) \backslash \Pi'_i(s)| \Pi_i(s)) \leq \epsilon$ and $P( \Pi'_i(s) \backslash \Pi_i(s)| \Pi'_i(s)) \leq \epsilon$.
Moreover, $P( \Pi'_i(s) \backslash \Pi_i(s)| \Pi'_i(s))+ P( \Pi'_i(s) \cap \Pi_i(s)| \Pi'_i(s))= 1$ implies $P(\Pi'_i(s) \cap \Pi_i(s) | \Pi'_i(s)) \geq 1-\epsilon$ by $P( \Pi'_i(s) \backslash \Pi_i(s)| \Pi'_i(s)) \leq \epsilon$. So,
\[P(G| \Pi_i(s))- P(G| \Pi'_i(s)) \leq \epsilon.\]
A symmetric argument ensures 
\[P(G| \Pi'_i(s))- P(G| \Pi_i(s)) \leq \epsilon.\]
Hence,
\[|P(G|\Pi_i(s))-P(G|\Pi'_i(s))| \leq \epsilon,\]
our desired contradiction.

\subsection{Lemma \ref{consist}}\label{proof_consist}

We show by induction that $s \in \cap_{m \geq 1} (B^p_{\Pi'})^m (E)= C^p_{\Pi'}(E)$ implies $(s,\tau'(s)) \in \cap_{m \geq 1} (B^p_{\mu'})^m ((\iota \times \tau')(E))= C^p_{\mu'}((\iota \times \tau')(E))$. For the base case, take $s \in B^p_{\Pi'}(E)$. Then, for all $i$, $$ \mu'((\iota \times \tau')(E)| \tau'_i(s))=P(E| \Pi'_i(s)) \geq p,$$ where the first equality is from Lemma \ref{prob_eq} and the inequality is from the definition of $B^p_{\Pi'}(E)$. Hence, $(s,\tau'(s)) \in (B^{p}_{\mu'}) ((\iota \times \tau')(E))$. 

The induction hypothesis is that if $s \in (B^p_{\Pi'})^m(E)$, then $(s,\tau'(s)) \in (B^{p}_{\mu'})^m ((\iota \times \tau')(E))$. Take $s \in (B^p_{\Pi'})^{m+1}(E)$. Then, for all $i$, $$ \mu'((\iota \times \tau')((B^{p}_{\Pi'})^{m} (E)))=P((B^{p}_{\Pi'})^{m} (E)|\Pi'_i(s)) \geq p,$$ where the equality again follows from Lemma \ref{prob_eq}. By the induction hypothesis, if $s \in (B^{p}_{\Pi'})^{m} (E)$, then $(s,\tau'(s)) \in (B^{p}_{\mu'})^m ((\iota \times \tau')(E))$. It follows that
\[(\iota \times \tau')((B^{p}_{\Pi'})^{m} (E)) \subseteq (B^{p}_{\mu'})^m ((\iota \times \tau')(E))  \] and, therefore, for all $i$,
$$\mu'((B^{p}_{\mu'})^m ((\iota \times \tau')(E))| \tau'_i(s)) \geq \mu'((\iota \times \tau')((B^{p}_{\Pi'})^{m} (E))| \tau'_i(s)) \geq p.$$ Hence, $(s,\tau'(s)) \in  (B^{p}_{\mu'})^{m+1} ((\iota \times \tau')(E))$. For every $m$ and every $s \in (B^p_{\Pi'})^m(E)$, we have $(s,\tau'(s)) \in (B^p_{\mu'})^m((\iota \times \tau')(E))$; hence, $s \in  C^p_{\Pi'}(E)$ implies $(s, \tau'(s)) \in C^p_{\mu'}((\iota \times \tau')(E))$.

\subsection{Lemma \ref{partitiontype}}\label{proof_partitiontype}

	If $d^{MS}(\Pi,\Pi') \leq \epsilon$, then $d^{MS}_1(\Pi, \Pi') \leq \epsilon$ and \[P(C^{1-\epsilon}_{\Pi'}(I_{\Pi,\Pi'}(\epsilon)) \cap I_{\Pi,\Pi'}(\epsilon)) \geq 1-\epsilon.\]
  Because $\mu'$ is $\tau'$-consistent, 
  \[\mu'(C^{1-\epsilon}_{\mu'}(A_{\mu,\mu'}(\epsilon)) \cap A_{\mu,\mu'}(\epsilon)) = P( (\iota \times \tau')^{-1}( C^{1-\epsilon}_{\mu'}(A_{\mu,\mu'}(\epsilon)) \cap A_{\mu,\mu'}(\epsilon))).\]
  Hence, it suffices to show
 \begin{equation}   (\iota \times \tau')(C^{1-\epsilon}_{\Pi'}(I_{\Pi,\Pi'}(\epsilon)) \cap I_{\Pi,\Pi'}(\epsilon)) \subseteq C^{1-\epsilon}_{\mu'}(A_{\mu,\mu'}(\epsilon)) \cap A_{\mu,\mu'}(\epsilon)
 \label{inclusion1} \end{equation}
 to establish $\rho^{KM}_1(\mu,\mu') \leq  \epsilon$, i.e.,
 \[\mu'(C^{1-\epsilon}_{\mu'}(A_{\mu,\mu'}(\epsilon)) \cap A_{\mu,\mu'}(\epsilon)) \geq 1-\epsilon. \] To prove \eqref{inclusion1}, observe from Lemma \ref{conditional} that if $s \in I_{\Pi,\Pi'}(\epsilon)$, then $(s, \tau'(s)) \in A_{\mu,\mu'}(\epsilon)$. We claim, also, that if $s \in C^{1-\epsilon}_{\Pi'}(I_{\Pi,\Pi'}(\epsilon))$, then $(s,\tau'(s)) \in C^{1-\epsilon}_{\mu'}(A_{\mu,\mu'}(\epsilon))$. Take $s \in C^{1-\epsilon}_{\Pi'}(I_{\Pi,\Pi'}(\epsilon))$. Then, by Lemma \ref{consist}, $(s,\tau'(s)) \in C^{1-\epsilon}_{\mu'}((\iota \times \tau')(I_{\Pi,\Pi'}(\epsilon)))$. By Lemma \ref{conditional}, $(\iota \times \tau')(I_{\Pi,\Pi'}(\epsilon)) \subseteq A_{\mu, \mu'}(\epsilon)$ and hence $(s,\tau'(s)) \in C^{1-\epsilon}_{\mu'}(A_{\mu,\mu'}(\epsilon))$ because $C^{1-\epsilon}_{\mu'}(E) \subseteq C^{1-\epsilon}_{\mu'}(F)$ for any events $E \subseteq F$. We have thus established \eqref{inclusion1} 
 and, therefore, $\rho^{KM}_1(\mu,\mu') \leq \epsilon$. A symmetric argument establishes $\rho^{KM}_1(\mu',\mu) \leq \epsilon$.

 It remains to show that $\rho^{KM}_0(\mu,\mu') \leq \epsilon$. Let $E \times F \subseteq S \times T$ be an arbitrary rectangle. Then,
\[|\mu(E \times F)- \mu'(E \times F)|= |P(E \cap \tau^{-1}(F))-P(E \cap (\tau')^{-1}(F))|\]
because $\mu$ is $\tau$-consistent and  $\mu'$ is $\tau'$-consistent. By the triangle inequality, we have
\begin{equation}
	\begin{aligned}
	|P(E \cap \tau^{-1}(F))-P(E \cap (\tau')^{-1}(F))| &\leq  \underbrace{|P(E \cap I_{\Pi,\Pi'}(\epsilon) \cap \tau^{-1}(F))-P(E \cap I_{\Pi,\Pi'}(\epsilon) \cap (\tau')^{-1}(F))|}_{(i)} \\
	& \quad + \underbrace{| P(E \backslash I_{\Pi,\Pi'}(\epsilon) \cap \tau^{-1}(F))-P(E \backslash  I_{\Pi,\Pi'}(\epsilon) \cap (\tau')^{-1}(F))|}_{(ii)}.
	\end{aligned}
	\notag 
\end{equation}
Term (i) equals zero by the common support condition: for every $s \in I_{\Pi,\Pi'}(\epsilon)$, we have $\tau(s)=\tau'(s)$. Hence,
\[\underbrace{\{s \in E \cap I_{\Pi,\Pi'}(\epsilon): \tau(s) \in F\}}_{= E \cap I_{\Pi,\Pi'}(\epsilon) \cap \tau^{-1}(F)}= \underbrace{ \{s \in E \cap I_{\Pi,\Pi'}(\epsilon): \tau'(s) \in F \}}_{ =E \cap I_{\Pi,\Pi'}(\epsilon) \cap (\tau')^{-1}(F)}.\] If $d^{MS}(\Pi,\Pi') \leq \epsilon$, then term $(ii)$ is less than $\epsilon$ because  \[\max\{P(E \backslash I_{\Pi,\Pi'}(\epsilon) \cap \tau^{-1}(F)), P(E \backslash  I_{\Pi,\Pi'}(\epsilon) \cap (\tau')^{-1}(F)) \} \leq P(S \backslash I_{\Pi,\Pi'}(\epsilon))\] and \[1-P(S \backslash I_{\Pi,\Pi'}(\epsilon))= P(I_{\Pi,\Pi'}(\epsilon)) \geq   P( C^{1-\epsilon}_{\Pi'}(I_{\Pi,\Pi'}(\epsilon)) \cap I_{\Pi,\Pi'}(\epsilon)) \geq 1-\epsilon.\] So,
\[P(E \backslash I_{\Pi,\Pi'}(\epsilon) \cap \tau^{-1}(F)) \leq \epsilon \]
and
\[P(E \backslash  I_{\Pi,\Pi'}(\epsilon) \cap (\tau')^{-1}(F)) \leq \epsilon.\]
Hence,
\[|P(E \backslash I_{\Pi,\Pi'}(\epsilon) \cap \tau^{-1}(F))-P(E \backslash  I_{\Pi,\Pi'}(\epsilon) \cap (\tau')^{-1}(F))| \leq \epsilon.\] We have thus shown
\[|\mu(E \times F)- \mu'(E \times F)|= |P(E \cap \tau^{-1}(F))-P(E \cap (\tau')^{-1}(F))| \leq \epsilon.\]
Moreover, because $E \times F \subseteq S \times T$ was an arbitrary rectangle, we have established \[\rho^{KM}_0(\mu,\mu')=\underset{E \times F \subseteq S \times T}{\sup}~|\mu(E \times F)-\mu'(E \times F)| \leq \epsilon.\] It follows that $d^{KM}(\mu,\mu') \leq \epsilon$.

\subsection{Lemma \ref{prob_eq2}}\label{proof_prob_eq2}

Because $\mu'$ is $\tau'$-consistent and $\tau'$ is a $\Pi'$-labeling,
\[\mu'( E  \cap (S \times \{\tau'_i(s)\} \times T_{-i}) )=  P((\iota \times \tau')^{-1}(E) \cap \Pi'_i(s))\]
and
\[\mu'(S \times \{\tau'_i(s)\} \times T_{-i})= P(\Pi'_i(s)).\]
It follows from the definition of conditional probability that
\begin{equation}
    \begin{aligned}
\mu'(E | \tau'_i(s)) &=
\frac{\mu'( E \cap (S \times \{\tau'_i(s)\} \times T_{-i}) )}{\mu'(S \times \{\tau'_i(s)\} \times T_{-i})} \\ &=\frac{P((\iota \times \tau')^{-1}(E) \cap \Pi'_i(s))}{P(\Pi'_i(s))}\\ &=P((\iota \times \tau')^{-1}(E)| \Pi'_i(s)). \notag
    \end{aligned}
\end{equation}

\subsection{Lemma \ref{conditional2}}\label{proof_conditional2}

We prove the contrapositive of the first statement in the Lemma. Suppose $s \not \in I_{\Pi,\Pi'}(\epsilon)$. Then, without loss of generality, there exists a player $i$ under which 
\[P(\Pi_i(s) \backslash \Pi'_i(s)| \Pi_i(s)) > \epsilon.\]
But,
\[ \mu( \Pi_i(s) \backslash \Pi'_i(s) \times T | \tau_i(s) )= P(\Pi_i(s) \backslash \Pi'_i(s)| \Pi_i(s))  \]
and 
\[ \mu'( \Pi_i(s) \backslash \Pi'_i(s) \times T| \tau'_i(s) )=0.\] So, if $\tau'_i(s)=\tau_i(s)=t_i$, then
\[ |\mu( \Pi_i(s) \backslash \Pi'_i(s) \times T | t_i)- \mu'( \Pi_i(s) \backslash \Pi'_i(s) \times T| t_i)|> \epsilon. \]
Hence, $(s,\tau'(s)) \not \in A_{\mu,\mu'}(\epsilon)$.

We now show that if $ s \in (\iota \times \tau')^{-1}(A_{\mu,\mu'}(\epsilon))$ and $\tau(s)=\tau'(s)$, then $P(D_i|\Pi'_i(s)) \leq \epsilon$, where $D_i:=\{s' \in S: \tau_i(s') \neq \tau'_i(s')\}$. Observe that  $s \in (\iota \times \tau')^{-1}(A_{\mu,\mu'}(\epsilon))$ and $\tau(s)=\tau'(s)=t$ ensures, for all $i$,
\[|\mu(E_i \times T | t_i)- \mu'(E_i \times T|t_i)| \leq \epsilon,\] where $E_i:=\{s' \in D_i: \tau'_i(s')=t_i \}$. But, using Lemma \ref{prob_eq2} and $\tau'$-consistency of $\mu'$,
\[\mu'(E_i \times T|t_i)=P(D_i| \Pi'_i(s))\]
and
\[\mu(E_i \times T | t_i)=0.\]
Thus, $P(D_i|\Pi'_i(s)) \leq \epsilon$.

\subsection{Lemma \ref{consist2}}\label{proof_consist2}
Suppose $s \in (\iota \times \tau')^{-1}(\cap_{m \geq 1} (B^{1-\epsilon}_{\mu'})^m (A_{\mu,\mu'}(\epsilon)) \cap A_{\mu,\mu'}(\epsilon))=(\iota \times \tau')^{-1}(C^{1-\epsilon}_{\mu'}(A_{\mu,\mu'}(\epsilon)) \cap A_{\mu,\mu'}(\epsilon))$ and $\tau(s)=\tau'(s)=t$. We show by induction that $s \in \cap_{m \geq 1} (B^{1-2\epsilon}_{\Pi'})^m (I_{\Pi,\Pi'}(\epsilon))=C^{1-2\epsilon}_{\Pi'}(I_{\Pi,\Pi'}(\epsilon))$. 

For the base case, take any $s \in S$ such that $(s,\tau'(s)) \in B^{1-\epsilon}_{\mu'}(A_{\mu,\mu'}(\epsilon)) \cap A_{\mu,\mu'}(\epsilon)$ and $\tau(s)=\tau'(s)=t$. Then, for all $i$, $\mu'(A_{\mu,\mu'}(\epsilon)|t_i) \geq 1-\epsilon$. By $\tau'$-consistency of $\mu'$ and Lemma \ref{prob_eq2}, for all $i$,
\[P((\iota \times \tau')^{-1}(A_{\mu,\mu'}(\epsilon))| \Pi'_i(s))=\mu'(A_{\mu,\mu'}(\epsilon)|t_i) \geq 1-\epsilon. \] By Lemma \ref{conditional2}, if $s \in (\iota \times \tau')^{-1}(A_{\mu,\mu'}(\epsilon)) $ and $\tau(s)=\tau'(s)$, then $s \in I_{\Pi,\Pi'}(\epsilon)$ and $P(\{s' \in S: \tau_i(s') \neq \tau'_i(s')\}|\Pi'_i(s)) \leq \epsilon$. So,
\[P(I_{\Pi,\Pi'}(\epsilon)|\Pi'_i(s)) \geq P((\iota \times \tau')^{-1}(A_{\mu,\mu'}(\epsilon))| \Pi'_i(s))- P(\{s' \in S: \tau_i(s') \neq \tau'_i(s')\}|\Pi'_i(s)) \geq 1-2\epsilon.  \]
That is, $s \in B^{1-2\epsilon}_{\Pi'}(I_{\Pi,\Pi'}(\epsilon))$.

The induction hypothesis is that for any $s \in S$ such that $(s,\tau'(s)) \in (B^{1-\epsilon}_{\mu'})^m (A_{\mu,\mu'}(\epsilon)) \cap A_{\mu,\mu'}(\epsilon)$ and $\tau(s)=\tau'(s)=t$, we have that $s \in (B^{1-2\epsilon}_{\Pi'})^m(I_{\Pi,\Pi'}(\epsilon))$. Suppose $(s,\tau'(s)) \in (B^{1-\epsilon}_{\mu'})^{m+1} (A_{\mu,\mu'}(\epsilon)) \cap A_{\mu,\mu'}(\epsilon)$ and $\tau(s)=\tau'(s)=t$. Then, for all $i$, $\mu'((B^{1-\epsilon}_{\mu'})^m (A_{\mu,\mu'}(\epsilon))|t_i) \geq 1-\epsilon$. By $\tau'$-consistency of $\mu'$ and Lemma \ref{prob_eq2}, for all $i$,
\[P((\iota \times \tau')^{-1}((B^{1-\epsilon}_{\mu'})^m (A_{\mu,\mu'}(\epsilon)))| \Pi'_i(s))=\mu'((B^{1-\epsilon}_{\mu'})^m (A_{\mu,\mu'}(\epsilon))|t_i) \geq 1-\epsilon. \]
By the induction hypothesis and $P(\{s' \in S: \tau_i(s') \neq \tau'_i(s')\}|\Pi'_i(s)) \leq \epsilon$,
\begin{equation}
    \begin{aligned}
        P((B^{1-2\epsilon}_{\Pi'})^m(I_{\Pi,\Pi'}(\epsilon))|\Pi'_i(s))\\ \geq P((\iota \times \tau')^{-1}((B^{1-\epsilon}_{\mu'})^m (A_{\mu,\mu'}(\epsilon)))| \Pi'_i(s))- P(\{s' \in S: \tau_i(s') \neq \tau'_i(s')\}|\Pi'_i(s))\\ \geq 1-2\epsilon.  
    \end{aligned} \notag
\end{equation}
That is, $s \in (B^{1-2\epsilon}_{\Pi'})^{m+1}(I_{\Pi,\Pi'}(\epsilon))$.

\subsection{Lemma \ref{typespartitions}}\label{proof_typespartitions}

To ease notation, let
\[A:= (\iota \times \tau')^{-1}(C^{1-\epsilon}_{\mu'}(A_{\mu,\mu'}(\epsilon)) \cap A_{\mu,\mu'}(\epsilon)),\]
\[B:=C^{1-\epsilon}_{\mu'}(A_{\mu,\mu'}(\epsilon)) \cap A_{\mu,\mu'}(\epsilon), \]
\[\underset{T_i}{\marg}~B :=\{t_i \in T_i: \exists (s,t_{-i}) ~\text{such that}~(s,t_i, t_{-i}) \in B\},\]
\[D:= \{s \in S: \tau(s) \neq \tau'(s)\},\]
and
\[D_i:= \{s \in S: \tau_i(s) \neq \tau'_i(s)\}.\]

We begin by proving two preliminary claims. 

\begin{claim}\label{lem_claim}
    \[ A \backslash D \subseteq C^{1-2\epsilon}_{\Pi'}(I_{\Pi,\Pi'}(\epsilon)) \cap I_{\Pi,\Pi'}(\epsilon) .\] 
\end{claim}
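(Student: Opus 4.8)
The plan is to prove the inclusion pointwise: I would fix an arbitrary state $s \in A \backslash D$ and show that it lies in both $I_{\Pi,\Pi'}(\epsilon)$ and $C^{1-2\epsilon}_{\Pi'}(I_{\Pi,\Pi'}(\epsilon))$, with the two memberships supplied directly by Lemmas \ref{conditional2} and \ref{consist2}. The key observation is that the hypothesis $\tau(s)=\tau'(s)$ shared by both lemmas is exactly the condition $s \notin D$, since $D=\{s \in S: \tau(s) \neq \tau'(s)\}$ by definition. Thus membership in $A \backslash D$ packages together precisely the two hypotheses each lemma needs.

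First I would unpack $s \in A$. By definition $s \in A$ means $(s,\tau'(s))=(\iota \times \tau')(s) \in C^{1-\epsilon}_{\mu'}(A_{\mu,\mu'}(\epsilon)) \cap A_{\mu,\mu'}(\epsilon)$, and since this intersection is contained in $A_{\mu,\mu'}(\epsilon)$, we in particular have $s \in (\iota \times \tau')^{-1}(A_{\mu,\mu'}(\epsilon))$. Combining this with $\tau(s)=\tau'(s)$, the hypotheses of Lemma \ref{conditional2} are met, so that lemma yields $s \in I_{\Pi,\Pi'}(\epsilon)$. (The probability bound $P(D_i \mid \Pi'_i(s)) \leq \epsilon$ delivered by the same lemma is not needed for this claim, though it will be used elsewhere.) Next, using the full membership $s \in (\iota \times \tau')^{-1}(C^{1-\epsilon}_{\mu'}(A_{\mu,\mu'}(\epsilon)) \cap A_{\mu,\mu'}(\epsilon))$ together with $\tau(s)=\tau'(s)$, the hypotheses of Lemma \ref{consist2} are met, giving $s \in C^{1-2\epsilon}_{\Pi'}(I_{\Pi,\Pi'}(\epsilon))$.

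Intersecting these two conclusions gives $s \in C^{1-2\epsilon}_{\Pi'}(I_{\Pi,\Pi'}(\epsilon)) \cap I_{\Pi,\Pi'}(\epsilon)$, and since $s \in A \backslash D$ was arbitrary, the claimed inclusion follows. I do not expect a genuine obstacle here: the claim is essentially a bookkeeping step that records the joint content of Lemmas \ref{conditional2} and \ref{consist2} under the common hypothesis $\tau(s)=\tau'(s)$. The only point requiring care is to verify that $s \in A$ supplies both the weaker $A_{\mu,\mu'}(\epsilon)$-membership needed by Lemma \ref{conditional2} and the stronger intersection-membership needed by Lemma \ref{consist2}; both follow immediately from the definition of $A$ and the fact that $C^{1-\epsilon}_{\mu'}(A_{\mu,\mu'}(\epsilon)) \cap A_{\mu,\mu'}(\epsilon) \subseteq A_{\mu,\mu'}(\epsilon)$.
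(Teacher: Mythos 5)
Your proof is correct and follows essentially the same route as the paper: both arguments observe that $s \in A \backslash D$ supplies the membership hypotheses and the condition $\tau(s)=\tau'(s)$ required by Lemmas \ref{conditional2} and \ref{consist2}, and then intersect the two conclusions. Your additional remark that Lemma \ref{conditional2} only needs the weaker membership $s \in (\iota \times \tau')^{-1}(A_{\mu,\mu'}(\epsilon))$ is accurate but not a substantive difference.
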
 

\begin{proof}
By Lemma \ref{conditional2}, if $s \in (\iota \times \tau')^{-1}(C^{1-\epsilon}_{\mu'}(A_{\mu,\mu'}(\epsilon)) \cap A_{\mu,\mu'}(\epsilon))$ and $\tau(s)=\tau'(s)$, then $s \in I_{\Pi,\Pi'}(\epsilon)$. Moreover, by Lemma \ref{consist2}, if $s \in (\iota \times \tau')^{-1}(C^{1-\epsilon}_{\mu'}(A_{\mu,\mu'}(\epsilon)) \cap A_{\mu,\mu'}(\epsilon))$ and $\tau(s)=\tau'(s)$, then $s \in C^{1-2\epsilon}_{\Pi'}(I_{\Pi,\Pi'}(\epsilon))$. Hence, $s \in A \backslash D$ implies $s \in C^{1-2\epsilon}_{\Pi'}(I_{\Pi,\Pi'}(\epsilon)) \cap I_{\Pi,\Pi'}(\epsilon) $. \end{proof}

\begin{claim}\label{lem_claim2}
\[P(D \cap A) \leq N \epsilon.\]
\end{claim}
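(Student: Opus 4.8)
The plan is to reduce the claim to a per-player bound and then extract that bound, at the level of individual partition elements, directly from the KM belief-coherence condition $A_{\mu,\mu'}(\epsilon)$. Since $\tau(s)\neq\tau'(s)$ holds exactly when $\tau_i(s)\neq\tau'_i(s)$ for some $i$, we have $D=\bigcup_{i\in\mathcal N}D_i$, and hence, by the union bound, $P(D\cap A)\leq\sum_{i\in\mathcal N}P(D_i\cap A)$. It therefore suffices to prove $P(D_i\cap A)\leq\epsilon$ for each fixed player $i$, after which summing over the $N$ players yields the claim.

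Fixing $i$, I would decompose $A$ along the partition $\Pi'_i$, writing $P(D_i\cap A)=\sum_{\pi'}P(D_i\cap A\cap\pi')$, where the sum runs over the (countably many) elements $\pi'\in\Pi'_i$ with $A\cap\pi'\neq\emptyset$. The goal is to show $P(D_i\mid\pi')\leq\epsilon$ for every such $\pi'$, since this gives $P(D_i\cap A\cap\pi')\leq P(D_i\cap\pi')=P(D_i\mid\pi')P(\pi')\leq\epsilon\,P(\pi')$, and hence, because the relevant $\pi'$ are disjoint, $P(D_i\cap A)\leq\epsilon\sum_{\pi'}P(\pi')\leq\epsilon$. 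To establish the per-element bound, I would fix a $\pi'$ meeting $A$, pick any $s\in A\cap\pi'$, and write $t^*_i:=\tau'_i(s)$ for the label carried by $\pi'$. Because $B\subseteq A_{\mu,\mu'}(\epsilon)$ we have $A\subseteq(\iota\times\tau')^{-1}(A_{\mu,\mu'}(\epsilon))$, so $(s,\tau'(s))\in A_{\mu,\mu'}(\epsilon)$; in particular condition~1 gives $\mu(t^*_i)>0$, so $\sigma_i:=\{s''\in S:\tau_i(s'')=t^*_i\}$ is a non-null element of $\Pi_i$ (as $\tau$ is a $\Pi$-labeling).

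The crux is then to apply condition~2 of $A_{\mu,\mu'}(\epsilon)$ to the single well-chosen rectangle $W:=\sigma_i\times T$. Using Lemma \ref{prob_eq2} (applied to $(\mu,\tau)$ at any state of $\sigma_i$, where $\Pi_i=\sigma_i$, and to $(\mu',\tau')$ at $s$, where $\Pi'_i(s)=\pi'$), one computes $\mu(W\mid t^*_i)=P(\sigma_i\mid\sigma_i)=1$ and $\mu'(W\mid t^*_i)=P(\sigma_i\mid\pi')$. Since $\sigma_i\cap\pi'=\{s'':\tau_i(s'')=\tau'_i(s'')=t^*_i\}=\pi'\setminus D_i$, the latter equals $1-P(D_i\mid\pi')$. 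Condition~2 then forces $|1-(1-P(D_i\mid\pi'))|=P(D_i\mid\pi')\leq\epsilon$, which is precisely the per-element bound.

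I expect the main obstacle to be exactly the point where the naive route breaks down: on $D\cap A$ the two labelings disagree, so one cannot invoke the per-state conclusion $P(D_i\mid\Pi'_i(s))\leq\epsilon$ of Lemma \ref{conditional2}, whose proof requires $\tau(s)=\tau'(s)$. The resolution is to avoid per-state reasoning entirely and bound $P(D_i\mid\pi')$ at the level of the partition element, exploiting that membership in $A_{\mu,\mu'}(\epsilon)$ holds at the witness $(s,\tau'(s))$ whether or not $s\in D_i$. Choosing $W=\sigma_i\times T$, the cylinder over the $\Pi_i$-element labeled $t^*_i$, is what lets the KM coherence condition read off the $P$-conditional mass that $\pi'$ places on states whose $\tau_i$-label differs from its $\tau'_i$-label, i.e., on $D_i$. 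The remaining steps (the union bound, the decomposition over $\Pi'_i$, and the two conditional-probability computations via Lemma \ref{prob_eq2}) are routine.
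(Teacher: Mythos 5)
Your argument is correct and follows essentially the same route as the paper's: a union bound over players, a decomposition of $D_i\cap A$ over the cells of $\Pi'_i$ (equivalently, the types $t_i$ with $\mu'(t_i)>0$ meeting $A$), and a per-cell bound extracted by applying condition 2 of $A_{\mu,\mu'}(\epsilon)$ to a single well-chosen rectangle whose $\mu$-conditional probability is degenerate. The only difference is the choice of test rectangle---the paper uses $\bigl(D_i\cap A\cap(\tau'_i)^{-1}(t_i)\bigr)\times T$, whose $\mu(\cdot\mid t_i)$-probability is $0$, whereas you use the complementary choice $\tau_i^{-1}(t_i^*)\times T$, whose $\mu(\cdot\mid t_i^*)$-probability is $1$---and the two yield the same bound.
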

\begin{proof}
 Observe that for any $i$ and $t_i  \in \underset{T_i}{\marg}~B$,
\[ | \mu( (D_i \cap A \cap \{s \in S: \tau'_i(s)=t_i\}) \times T|t_i)-\mu'( (D_i \cap A \cap \{s \in S: \tau'_i(s)=t_i\}) \times T|t_i)| \leq \epsilon \]
by the definition of $A_{\mu,\mu'}(\epsilon)$. Observe that
\[ \mu( (D_i \cap A \cap \{s \in S: \tau'_i(s)=t_i\}) \times T|t_i)=0 \]
because $D_i \cap \{s \in S: \tau'_i(s)=t_i\} \subseteq \{s \in S: \tau_i(s) \neq t_i \}.$ Moreover,
\[\mu'((D_i \cap A) \times T| t_i)=\mu'( (D_i \cap A \cap \{s \in S: \tau'_i(s)=t_i\}) \times T|t_i).\]
So, for any $i$ and $t_i  \in \underset{T_i}{\marg}~B$,
\[\mu'( (D_i \cap A) \times T| t_i) \leq \epsilon.\] Note, also, for any $i$ and $t_i \not \in \underset{T_i}{\marg}~B$, 
\[\mu'((D_i \cap A) \times T| t_i)=0\]
because $\{s \in A=(\iota \times \tau')^{-1}(B): \tau'_i(s)=t_i\}=\emptyset$. Hence, by $\tau'$-consistency of $\mu'$, for any $i$,
\begin{equation}
\begin{aligned}
    P(D_i \cap A) &=\mu'( (D_i \cap A) \times T) \\
    &= \sum_{ t_i  \in \underset{T_i}{\marg}~B} \mu'(t_i) \underbrace{\mu'((D_i \cap A) \times T| t_i)}_{\leq \epsilon}+   \sum_{ t_i \not \in \underset{T_i}{\marg}~B} \mu'(t_i) \underbrace{\mu'((D_i \cap A) \times T| t_i)}_\text{$=0$}\\
    &\leq \epsilon.
\end{aligned} \notag
\end{equation} Therefore,
\[P(D \cap A) \leq \sum^N_{i=1} P(D_i \cap A) \leq N \epsilon.\]
\end{proof}

Now, observe that if $d^{KM}(\mu, \mu')\leq \epsilon$, then $\mu'(B) \geq 1-\epsilon$. Because $\mu'$ is $\tau'$-consistent,
\[P(A)=\mu'(B) \geq 1-\epsilon. \] Because $A \backslash D \subseteq C^{1-2\epsilon}_{\Pi'}(I_{\Pi,\Pi'}(\epsilon)) \cap I_{\Pi,\Pi'}(\epsilon)$ by Claim \ref{lem_claim}, we have 
\[P(C^{1-2\epsilon}_{\Pi'}(I_{\Pi,\Pi'}(\epsilon)) \cap I_{\Pi,\Pi'}(\epsilon)) \geq P(A\backslash D)=P(A)-P(D \cap A).\]
Because $P(D \cap A) \leq N \epsilon$ by Claim \ref{lem_claim2} and $P(A) \geq 1-\epsilon$, we thus have that
\[P(C^{1-2\epsilon}_{\Pi'}(I_{\Pi,\Pi'}(\epsilon)) \cap I_{\Pi,\Pi'}(\epsilon)) \geq  1-\epsilon -N\epsilon = 1-(N+1)\epsilon. \] Because \[C^{1-(N+1) \epsilon}_{\Pi'}(I_{\Pi,\Pi'}((N+1) \epsilon)) \cap I_{\Pi,\Pi'}((N+1) \epsilon) \supseteq C^{1-2\epsilon}_{\Pi'}(I_{\Pi,\Pi'}(\epsilon)) \cap I_{\Pi,\Pi'}(\epsilon),  \]
it follows that
\[P(C^{1-(N+1) \epsilon}_{\Pi'}(I_{\Pi,\Pi'}((N+1) \epsilon)) \cap I_{\Pi,\Pi'}((N+1) \epsilon)) \geq P(C^{1-2\epsilon}_{\Pi'}(I_{\Pi,\Pi'}(\epsilon)) \cap I_{\Pi,\Pi'}(\epsilon)) \geq 1-(N+1)\epsilon,\] i.e., $d^{MS}_1(\Pi,\Pi') \leq (N+1) \epsilon$. A symmetric argument ensures $d^{MS}_1(\Pi',\Pi) \leq (N+1)\epsilon$. Thus, $d^{MS}(\Pi,\Pi') \leq (N+1) \epsilon$.

\end{appendix}

\bibliography{bib}

@article{peskietal,
	Author = {Gensbittel, Fabien and Peski, Marcin and Renault, J{\'e}r{\^o}me},
	Date-Added = {2022-03-12 13:56:27 -0500},
	Date-Modified = {2023-04-09 16:27:45 -0400},
	Journal = {Theoretical Economics},
	Number = {3},
	Pages = {1225-1267},
	Title = {Value-Based Distance Between Information Structures},
	Url = {https://econtheory.org/ojs/index.php/te/article/view/20221225/0},
	Volume = {17},
	Year = {2022},
	Bdsk-Url-1 = {https://arxiv.org/abs/2109.06656},
	Bdsk-Url-2 = {https://doi.org/10.48550/ARXIV.2109.06656}}

@article{monderer1989approximating,
	Author = {Monderer, Dov and Samet, Dov},
	Date-Added = {2018-05-12 17:56:14 +0000},
	Date-Modified = {2018-05-12 17:56:14 +0000},
	Journal = {Games and Economic Behavior},
	Number = {2},
	Pages = {170--190},
	Publisher = {Elsevier},
	Title = {Approximating common knowledge with common beliefs},
	Volume = {1},
	Year = {1989}}

@article{rubinstein1989electronic,
	Author = {Rubinstein, Ariel},
	Date-Added = {2018-05-12 17:55:50 +0000},
	Date-Modified = {2023-04-19 19:03:14 -0400},
	Journal = {The American Economic Review},
	Number = {3},
	Pages = {385--391},
	Publisher = {JSTOR},
	Title = {The Electronic Mail Game: Strategic Behavior Under ``Almost Common Knowledge"},
	Volume = {79},
	Year = {1989}}

@article{rothschild2005payoff,
	Author = {Rothschild, Casey G},
	Date-Added = {2017-12-20 18:25:32 +0000},
	Date-Modified = {2017-12-20 18:25:32 +0000},
	Journal = {Journal of Economic Theory},
	Number = {2},
	Pages = {270--274},
	Publisher = {Elsevier},
	Title = {Payoff continuity in incomplete information games: A comment},
	Volume = {120},
	Year = {2005}}

@article{kajii1998payoff,
	Author = {Kajii, Atsushi and Morris, Stephen},
	Date-Added = {2017-11-30 20:07:06 +0000},
	Date-Modified = {2017-11-30 20:07:06 +0000},
	Journal = {Journal of Economic Theory},
	Number = {1},
	Pages = {267--276},
	Publisher = {Elsevier},
	Title = {Payoff continuity in incomplete information games},
	Volume = {82},
	Year = {1998}}

@article{MS1996,
	Author = {Dov Monderer and Dov Samet},
	Date-Added = {2017-11-30 20:01:53 +0000},
	Date-Modified = {2017-11-30 20:02:22 +0000},
	Issn = {0364765X, 15265471},
	Journal = {Mathematics of Operations Research},
	Number = {3},
	Pages = {707-725},
	Publisher = {INFORMS},
	Title = {Proximity of Information in Games with Incomplete Information},
	Url = {http://www.jstor.org/stable/3690305},
	Volume = {21},
	Year = {1996},
	Bdsk-Url-1 = {http://www.jstor.org/stable/3690305}}

\end{document}